\numberwithin{equation}{section}
\newtheorem{theorem}{Theorem}
\newtheorem{remark}{Remark}[section]
\newtheorem{lemma}{Lemma}
\newcommand{\calC}{\mathcal{C}}
\newcommand{\maxcl}{\mathcal{C}_{\infty}}
\newcommand{\calE}{\mathcal{E}}
\newcommand{\calEst}{\mathcal{E}_*}
\newcommand{\calG}{\mathcal{G}}
\newcommand{\calGst}{\mathcal{G}_*}
\newcommand{\calN}{\mathcal{N}}
\newcommand{\calT}{\mathcal{T}}
\newcommand{\calTst}{\mathcal{T}_*}
\newcommand{\calV}{\mathcal{V}}
\newcommand{\calVst}{\mathcal{V}_*}
\newcommand{\bxi}{\boldsymbol{\xi}}
\begin{document}

\title{Modelling the spread of two successive SIR epidemics on a configuration model network}

\author[1]{\fnm{Frank} \sur{Ball}}

\author[2,3]{\fnm{Abid Ali} \sur{Lashari}}

\author[1]{\fnm{David} \sur{Sirl}}

\author*[2,4]{\fnm{Pieter} \sur{Trapman}}\email{j.p.trapman@rug.nl}

\affil[1]{\orgdiv{ School of Mathematical Sciences}, 
\orgname{University of Nottingham, UK}}

\affil[2]{\orgdiv{ Department of Mathematics}, \orgname{Stockholm University, Sweden}} 

\affil[3]{
\orgname{Karolinska Institutet}, \orgaddress{ \city{Stockholm, Sweden}}}

\affil[4]{\orgdiv{Bernoulli Institute}, \orgname{University of Groningen, Netherlands}}


\abstract{We present a stochastic model for two successive SIR (Susceptible $\to$ Infectious $\to$ Recovered) epidemics in the same network structured population. Individuals infected during the first epidemic might have (partial) immunity for the second one.
The first epidemic is analysed through a bond percolation model, while the second epidemic is approximated by a three-type branching process in which the types of individuals depend on their position in the percolation clusters used for the first epidemic. This branching process approximation enables us to calculate, in the large population limit and conditional upon a large outbreak in the first epidemic, a threshold parameter and the probability of a large outbreak for the second epidemic. A second branching process approximation enables us to calculate the fraction of the population that are infected by such a second large outbreak.

We illustrate our results through some specific cases which have appeared previously in the literature
and show that our asymptotic results give good approximations for finite populations.}

\keywords{Subsequent SIR epidemics, Reed-Frost model, Percolation, Multi-type branching processes}
\pacs[MSC Classification]{05C80, 60J80, 60J85, 60K35, 92D30}

\maketitle
\section{Introduction}

In the first half of 2020 many countries and regions experienced an outbreak of the severe acute respiratory syndrome coronavirus 2 (SARS-CoV-2) pandemic, which causes coronavirus disease 2019 (COVID-19).  It took several months of often heavy restrictions for many countries to reduce the prevalence to low levels.  After this, many restrictions were relaxed and a second wave of the pandemic hit those countries and regions, spreading among people who were either still susceptible or had become susceptible again after the first wave.
Some other diseases, such as influenza, follow seasonal patterns, and often leave part of the population which was infected in one outbreak (partially) immune to an epidemic in the next year. For yet other diseases, having been infected in the past or currently being infected with one disease might give increased susceptibility for the second disease (see e.g.\ the introductions of \cite{Moore} and \cite{Bans12} and references therein).

The modelling of this kind of interacting stochastic epidemics on (social) networks has been the subject of some recent epidemiological research. Most models assume that the second epidemic spreads after the first epidemic has finished spreading, so the only relevant information from the first epidemic is which individuals were infected during the epidemic, not when they got infected.  In this paper we study the case in which two epidemics spread successively in a single population modelled by a configuration model network (e.g.\ \cite[Chapter 3]{Durr07} or \cite[Chapter 7]{van2016random}) in which individuals are represented by vertices and relationships which are relevant for the spread of the disease by edges.
We assume that a first disease spreads through the network as an SIR (Susceptible $\to$ Infectious $\to$ Recovered) epidemic, which causes a major outbreak infecting a nonnegligible fraction of the population.
The spread of this epidemic is modelled using bond percolation (see e.g.~\cite{Grimmett1999, Kuul82, newman2002spread} and references therein).
The connected percolation cluster of vertices of the initial infected vertex is distributed as the set of individuals that are no longer susceptible at the end of the epidemic \citep{cox1988limit}. A large outbreak then corresponds to the percolation cluster containing the initial infected vertex, being of the same order as the population size (for a detailed description see \cite{Ball09,Meester}).

The first epidemic continues until there is no infectious individual remaining in the population. After that, the second SIR epidemic spreads through the same network, where the transmission probabilities between given individuals depend on whether or not those individuals were infected in the first epidemic. We assume that the first epidemic results in a large outbreak and obtain a threshold parameter which determines whether a large second epidemic is possible, together with expressions for the probability of a large second epidemic and the fraction of the population that are infected by a large second epidemic. These expressions are asymptotically exact in the limit as the population size $n \to \infty$, unlike much closely related previous work in this area.

SIR epidemics in a closed population are among the most studied class of stochastic epidemic models. In such models, an individual is in one of three possible states, susceptible $(S)$, infected $(I)$ and recovered $(R)$, and the only possible transitions are from $S$ to $I$ and from $I$ to $R$.  A first quantity of interest in the SIR model is the basic reproduction number $R_{0}$ \cite[Chapter 7]{diekmann2012mathematical}, which is a threshold parameter that determines whether or not major outbreaks infecting a strictly positive fraction of the population occur with non-zero probability as $n\to \infty$. It is usually defined as the average number of secondary infections caused by a typical newly-infected individual in the early stages of an epidemic started by one or few initially infectious individuals in an otherwise susceptible population. In line with much of stochastic epidemic theory \citep{Ande00}, one can use branching process approximations to prove a threshold theorem 
which determines whether or not, in the large population limit,
there is a non-zero probability for the epidemic to infect a strictly positive fraction of the population.
The branching process is subcritical, critical or supercritical if $R_{0}<1$, $R_{0}=1$ or $R_{0}>1$, respectively. These methods also yield
approximations for the probability of a large outbreak (i.e.\ the probability that a supercritical epidemic will grow large) and, using
closely related methods, one can also study final outcome (i.e.\, fraction of the population that is infected during an epidemic) properties of such a large outbreak \citep{Ball02,Ball09,britton2010stochastic,Kenah11}.

We use similar methods to analyse the second epidemic given that the first epidemic resulted in a large outbreak.  We approximate the early spread of such a second epidemic using a three-type branching process in which the types of individuals depend on the first epidemic (or, more precisely, on their position in the corresponding percolation component).  This approximation yields a threshold parameter for such a second epidemic and also the approximate probability of a large second epidemic.  We use a closely related three-type branching process to obtain an approximation to the fraction of the population that are infected by a large second epidemic.  A disadvantage of using bond percolation to analyse the first epidemic is that it requires the infectious period to be constant.  Our methodology can be extended to the case when the infectious periods are independent and identically distributed, with an arbitrary but specified distribution. However, the arguments are more complex and four-type approximating branching processes are required.  For ease of presentation we present our results within the framework of a fixed infectious period.  We outline how to extend our analysis to allow for random infectious periods in the discussion in Section~\ref{sec:disc}.

A number of authors have previously considered the spread of more than one epidemics on a network.  \cite{newman2005threshold} studied the existence of a threshold for a second epidemic which spreads on the same configuration model network
as a first epidemic, if the individuals that were infected in the first epidemic are completely immune to the second one.
\cite{Bans12} work in the same framework but relax the assumption of the first epidemic giving complete immunity to the second epidemic, so instead it confers partial immunity.  However, as we show below, the analysis in \cite{Bans12} is based on implicit approximation\added{s}, which are not exact in the large population limit.
\cite{pmid23951134} consider the case in which only individuals infected during the first epidemic are susceptible to the second epidemic.
\cite{Moore} consider a similar model on an Erd\H{o}s-R{\'e}nyi random graph (see e.g.\ \cite[Chapter 2]{Durr07}, \cite[Chapters 4 and 5]{van2016random}), but in their model the second epidemic can only infect individuals which are still infectious in the first epidemic\comment{This is actually correct, it is about superinfection in Moore}. So, in contrast to the models in \cite{newman2005threshold,Bans12,pmid23951134}, Moore et al.\ have to keep track of the spread of the first epidemic as it evolves in time in order to study the spread of the second epidemic.
\cite{Funk10} consider the spread of two epidemics in the same population, where the set of connections between individuals are different for the two epidemics and are generated as two possibly dependent configuration models. In their model the first epidemic provides immunity against the second epidemic and they analyse how the spread of the second epidemic is influenced by the dependence between the two networks.

Our model is closely related to the models presented by \cite{pmid23951134} and \cite{Bans12}.
\cite{Bans12} present two models involving partial immunity. First, they model {\it polarized} partial immunity by assuming that a fraction of the individuals infected in the first epidemic are immune to the second epidemic while all other individuals are fully susceptible for the second epidemic.
They then consider the number of neighbours (individuals connected by an edge in the configuration model) which are susceptible to the second epidemic, of a uniformly chosen individual among the individuals susceptible for the second epidemic and create a new configuration model
of those individuals susceptible to the second epidemic.
Using the standard configuration model and bond percolation on it, they derive epidemiological quantities for a second outbreak in the partly immunised population.
Following arguments in \cite{pmid23951134}, we note that this approach leads to an approximation which is \emph{not} exact in the large population limit.
This can be seen by considering the individuals infected during the first epidemic.
Those vertices necessarily form a connected subgraph. However, in a newly constructed configuration model, they do not need to be all connected.

In the second model of \cite{Bans12}, {\it leaky} partial immunity, which reduces the infectivity and susceptibility of all individuals infected in the first epidemic by the same factor, is considered. In Bansal and Meyers' analysis of this model, a two-type bond percolation process is used.
As for polarized partial immunity, this leads to a tractable analysis which permits exploration of the properties of the approximating model.  However, it too is \emph{not} exact in the large population limit, again because the vertices that are infected in the first epidemic necessarily form a connected subgraph, a property which is lost in the approximation of \cite{Bans12}.


This paper is organised as follows. In Section \ref{sec:epidemic}, we describe our epidemic model. We introduce the configuration model graph/network and the two consecutive SIR epidemics defined on this graph. We describe the so-called polarized and leaky partial immunity models.  In Section \ref{sec:Results}, we state the main results of our paper in four theorems.   Theorems~\ref{thmR0} and~\ref{thmsurvival} give a threshold parameter for a second large outbreak to have strictly positive probability and the computation of the probability of such a large outbreak, respectively.   Theorem~\ref{thmfinalsize} gives the computation of the fraction of the population infected by a large second epidemic. Leaky partial immunity is a special case of our more general model for the second epidemic, so these three theorems apply to that model. Theorem~\ref{thmpol} gives equivalent results to the first three theorems but for polarized partial immunity, which does not fit exactly our framework.  Computation of the threshold parameter and large outbreak probability for the second epidemic is the same as before but computation of the fraction infected by a large epidemic needs modification.
In Section \ref{examples}, we analyse three examples.  The first concerns the herd immunity threshold after the first epidemic when individuals who are infected by it are immune to the second epidemic.  The second and third correspond to the modelling of polarized and leaky partial immunity as introduced by \cite{Bans12}.  In Section \ref{sec:illus} we show through numerical simulations that our asymptotic results perform well in finite networks.
In Section \ref{sec:proofs} we provide the proofs of the main theorems. To do this we first present an approximating three-type branching process for the second epidemic and compute its mean offspring matrix and its  probability of survival, which corresponds to the probability of a large outbreak in the second epidemic. Using another approximating three-type branching process, we also give a recipe to compute the fraction of the population infected in such a large outbreak.     Finally, Section \ref{sec:disc}, we discuss some possible extensions, including non-constant infectious periods, and other future work.

\section{The epidemic model}
\label{sec:epidemic}
\subsection{The configuration model}
\label{sec:conmod}
Formally, we consider $\{E^{(n)};{n \in \mathbb{N}}\}$, a sequence of models for the spread of SIR epidemics on random graphs/networks. To be specific, we assume that the models are independent for different $n$. The epidemic $E^{(n)}$ spreads on the random graph
$\calG= \calG^{(n)} = (\calV^{(n)}, \calE^{(n)})$, where the vertex set $\calV^{(n)}$ consists of $n$ vertices that represent the individuals.  The edges in the edge set $\calE^{(n)}$ represent connections/relationships between individuals through which infection might transmit.
We obtain results for the asymptotic case $n \to \infty$. In the remainder of the paper we often suppress the superscript $(n)$.

The graph $\calG$ is generated by (a version of) the configuration model (for a detailed description see \cite[Chapter 3]{Durr07} or \cite[Chapter 7]{van2016random}).
We construct $\calG$ by assigning an i.i.d. (independent and identically distributed) number of half-edges to each vertex.
The number of half-edges assigned to a vertex is called its degree. The degrees are distributed according to an arbitrary but specified discrete random variable $D$ having probability mass function
$$p_k= \mathbb{P}(D = k), \qquad (k = 0, 1,...).$$
We assume that $\mathbb{E}[D^{2}]<\infty$. The half-edges are then paired uniformly at random to create
the edges in the graph.
In this construction some imperfections might occur.
If the total number of half edges is odd, we ignore the single leftover half-edge which has no effect on the degree distribution as $n\to\infty$.
Furthermore,  it is well-known that the numbers of self-loops (edges from a vertex to itself) and multiple edges between the same pair of vertices are small (the numbers are bounded in expectation as $n \to \infty$) under our assumption that  $\mathbb{E}[D^{2}]<\infty$ \cite[p.\ 230]{van2016random}. Moreover, the probability that the graph has no such imperfections is bounded below by a strictly positive constant as $n \to \infty$, so convergence in probability results may be transferred from epidemics on the constructed graph $\calG$ to those on $\calG$ conditioned on having no such imperfections \citep{Jans14}. Therefore we can safely ignore self-loops and merge parallel edges.

In the construction of the configuration model, a half-edge is $k$ times as likely to be paired with a half-edge emanating from
an individual with degree $k$ than with one emanating from an individual with degree 1. Therefore, a typical \added{(i.e.\ uniformly chosen)} neighbour of a typical individual in the network has a size-biased degree distribution \citep{newman2002spread} defined through
\begin{equation}
\label{sizebias}
\tilde{p}_{k}=\mathbb{P}(\tilde{D}=k)=\frac{1}{\mu_{D}}kp_k,\qquad(k=1,2,\cdots),
\end{equation}
where
\begin{equation}
\label{Dmean}
\mu_{D}=\sum_{\ell=0}^{\infty}\ell p_{\ell}
\end{equation}
is the expected degree of a vertex.

\subsection{The first epidemic}
\label{sec:epigra}

\added{In this subsection we discuss how the first epidemic spreads on a network. For a more extensive discussion and some heuristic and rigorous analysises of those epidemic see e.g.\ \cite[Section 3.5]{Durr07}, \cite{Ande99,Ball09,Kenah07,Mill07,newman2002spread}.}

The first epidemic is defined as follows. We consider an SIR epidemic on $\calG$. Each individual is assumed to be susceptible, infectious or recovered. We say that a vertex is susceptible, infectious or recovered if the individual it represents is in that ``infection state''. Throughout we assume that at time $0$, one individual chosen uniformly at random from the population is infectious, while all other individuals are susceptible. Our model and analysis can easily be generalised to other initial conditions, such as having a bounded (as $n \to \infty$) number of initially infected individuals chosen uniformly at random, or assuming that the initial infectious individual has a specified degree.

Neighbours in the population, i.e.\ individuals/vertices connected by an edge in $\mathcal{G}$, contact each other according to independent homogeneous Poisson processes with rate $\beta$. If the contact is between a susceptible and an infectious vertex, then the susceptible one immediately becomes infectious; otherwise the contact has no effect.
In many papers on stochastic epidemics, infectious individuals stay infectious for a possibly random period distributed as the random variable $L$.
However, in order to improve the readability of the analysis in this paper \added{and avoid having to deal with some depencies (see e.g., \cite{Kuul82,Meester,Kenah11}),}   we assume that $L$ is constant, i.e.\ $L$ is not random. Without loss of generality we assume that $\mathbb{P}(L=1)=1$ (i.e.\ we scale  time such that the infectious period is one time unit). In Section \ref{sec:disc} we remark on the model with more general distributions of $L$.
\added{In particular we discuss why extending the analysis is not \deleted{entirely }trivial, but still explain how this extension can be made.}
Because $L$ is not random, the first epidemic on the network has the same final size distribution as a corresponding Reed-Frost model (see e.g.\ \cite[p.\ 17]{Ande00}).
Because $\mathbb{P}(L=1)=1$, any given infected individual makes at least one contact during its infectious period with each of its neighbouring individuals independently and with the same probability, $p=1-\rm{e}^{-\beta}$.
The first epidemic ends when there is no infectious individual remaining in the population.

We define the basic reproduction number of the first epidemic as
\begin{equation}
\label{firstR0}
R_0^{(1)}= p \mathbb{E}[\tilde{D}-1]=p\left(\mu_{D}+\frac{\sigma^{2}_{D}}{\mu_{D}}-1\right),
\end{equation}
where $\tilde{D}$ is defined in \eqref{sizebias}, $\mu_D$ in \eqref{Dmean} and $\sigma^{2}_{D}=\sum_{\ell=0}^{\infty}\ell^{2} p_{\ell}-\mu_{D}^{2}$ is the variance of $D$.
Note that $\mathbb{E}[\tilde{D}-1]$ is the expected number of susceptible neighbours a typical vertex has just after being infected in the early stages of the epidemic, while each of those susceptible neighbours is infected by this vertex with probability $p$. So, $R_0^{(1)}$ is the expected number of neighbours infected by a typical infected vertex during the early stages of an epidemic.
It is well-known (e.g.\ \cite{newman2002spread,britton2010stochastic}) that this quantity is a threshold parameter for the first epidemic: in a large population, a large outbreak occurs with strictly  positive probability if and only if $R_0^{(1)}>1$.

It follows from branching process approximations, see e.g.~\cite[Theorem 3.5.1]{Durr07}), \cite{Kenah07}, or \cite{Ball09}, that both  the probability of a large outbreak and, conditioned on a large outbreak, the fraction of the population infected  converge (as $n \to \infty$) to the probability that a well-chosen 2-stage Galton-Watson process survives.  This probability can be shown to  be $1-q$, where  the constant $q$ is given by
\begin{equation}
\label{qdef0}
q=\mathbb{E}[(1-p+p \tilde{q})^{D}],
\end{equation}
where $\tilde{q}$ is the smallest positive solution of
\begin{equation}
\label{qdef}
s=\mathbb{E}[(1-p+ps)^{\tilde{D}-1}].
\end{equation}
Indeed, it is easily checked that $\tilde{q} \in [0,1)$ (and hence $q \in [0,1)$) if and only if $R_0^{(1)}>1$, excluding the pathological case when
all vertices have degree $2$ and $p=1$. The equality of the probability of a large outbreak and the fraction of the population infected in a large outbreak follows  immediately if one uses bond percolation arguments (see e.g.\ \cite{cox1988limit,Kenah11,Meester} and Section \ref{percapproach} below) to study the epidemic, and critically depends on the assumption that the infectious period is not random.

\subsection{The second epidemic}
\label{subsec:secondepi}
To analyse the second epidemic we keep track of whether or not vertices were infected in the first epidemic. So, we need four rates $\beta_{00}$,  $\beta_{01}$,  $\beta_{10}$ and  $\beta_{11}$ to denote the rates of infectious contacts between different types of vertices. The parameter $\beta_{00}$ is the rate at which an individual who was not infected during the first epidemic makes infectious contacts with a given neighbour who was also not infected during the first epidemic. Here infectious contacts need not be symmetric and are defined as  contacts which would lead to infection if the ``contacter'' is infectious and the ``contactee'' is susceptible.
Likewise, $\beta_{01}$ is the rate at which an individual who was not infected during the first epidemic makes infectious contacts  with a given neighbour who was infected during the first epidemic. The rates
$\beta_{10}$ and  $\beta_{11}$ are defined similarly.
Furthermore, we assume that the infectious period of the disease spreading in the second epidemic for an $i$-individual ($i \in \{0,1\}$) is fixed at length $\ell_i^{(2)}$ and define
\begin{equation}
\label{pidef}
\pi_{ij} = 1-{\rm e}^{-\beta_{ij} \ell_i^{(2)}}, \qquad \mbox{for $i,j \in \{0,1\}$},
\end{equation}
that is, $\pi_{ij}$ is the probability that, if it gets infected, an $i$-individual makes infectious contact with a given $j$-neighbour.

\begin{remark}\label{Rem1}
It is easy to check that if $\pi_{11}=p'$ for some $p'\in [0,1]$ and $\pi_{10}=\pi_{01}=\pi_{00}= 0$ then our model corresponds to the model in \cite{pmid23951134}. Moreover, if we take $\pi_{00}=p'$ for some $p'\in [0,1]$ and $\pi_{11}=\pi_{01}=\pi_{10}= 0$ then our model corresponds to the model in \cite{newman2005threshold}. Thus, the models in \cite{pmid23951134} and \cite{newman2005threshold} are special cases of our model.
\end{remark}

\subsection{Models for partial immunity}
\label{sec:partial}

In this subsection we describe the models for polarized and leaky partial immunity as introduced by \cite{Bans12} and outline their relationship to the model for the second epidemic in Section~\ref{subsec:secondepi}.

\subsubsection{Polarized partial immunity}
\label{sec:poldef}

In the polarized partial immunity model, every recovered individual after the first epidemic becomes susceptible to the second epidemic independently with probability $\alpha$, while the other recovered individuals stay immune.   Note that this model does not fit completely the framework in Section~\ref{subsec:secondepi}\replaced{. Indeed,}{ because} the susceptibility of an individual to the second epidemic depends not only on whether or not it was infected by the first epidemic\added{,} but also on whether it is rendered immune to the second epidemic if it was infected by the first epidemic\replaced{. The latter}{, which} is random unless $\alpha=0$ or $\alpha=1$.\comment{PT Oct11:rewrote because I felt sentence was too long}  Our results concerning the probability of a large second epidemic are based upon a branching process approximation to the early stages of the second epidemic, a key assumption of which is that, apart from its infector, every individual contacted by an infective is someone who has not been contacted previously.  Thus, in that branching process approximation, we may reduce the probability of transmission to any individual infected in the first epidemic by the factor $\alpha$ and the asymptotic probability of a large second epidemic can be obtained by setting
\begin{equation}
\label{pi-pol}
\pi_{00}= p', \pi_{01}=\alpha p', \pi_{10}=p' \text{ and } \pi_{11}=\alpha p'
\end{equation}
in our general theory.  (Here $p'$ is the probability that in the second epidemic an infective infects a given susceptible neighbour who was not infected by the first epidemic, which can be different from $p$, the infection probability in the first epidemic.)  However, it is \emph{incorrect} to do this when considering the size of a large second epidemic because once the epidemic has become established, a given individual who was infected in the first epidemic may be contacted by more than one distinct infective in the second epidemic and the outcomes of those contacts are \emph{not} independent (unless $\alpha=0$ or $\alpha=1$), since they depend on whether or not the individual is immune to the second epidemic.  Nevertheless, our proof can be adapted
to determine the asymptotic final size of a large second epidemic under this model for polarized partial immunity.

\subsubsection{Leaky partial immunity}
\label{sec:leakydef}

Recall that under leaky partial immunity all individuals infected by the first epidemic are affected identically in terms of the second epidemic; they have reduced susceptibility and also reduced infectivity if they are infected.  \cite{Bans12} model this by assuming that the probability of transmission per susceptible-infectious neighbour pair is reduced by a factor $\alpha$ if one of the pair was infected by the first epidemic and by a factor $\alpha^2$ if both were infected by the first epidemic.  This can be generalised by having factors $\alpha_S$ and $\alpha_I$ for susceptibility and infectivity,  respectively, and fits our general theory with
\begin{equation}
\label{pi-leaky1}
\pi_{00}=p', \pi_{01}=\alpha_S p', \pi_{10}=\alpha_I p' \text{ and } \pi_{11}=\alpha_S \alpha_I p'.
\end{equation}

An alternative, which aligns more closely with models for vaccine action is to assume that these reductions are in terms of changed  probability of transmission per contact between a susceptible and infectious individual, depending on whether the individuals involved have been infected before.  To model this, give individuals in the second epidemic a further typing; specifically call individuals that were infected by the first epidemic type $V$ and those that were not infected type $U$.  (This labelling reflects the similarity between immunity conferred by prior infection to that conferred by vaccination.)  Then, in the second epidemic, each contact between an infectious and a type-$V$ susceptible leads to the infection of the latter independently with probability $\alpha_S$, and the rate at which a type-$V$ infectious contacts
its neighbours is a factor $\alpha_I$ lower than the corresponding rate for a type-$U$ infectious.  This implies that
the $\beta_{ij}$s defined just before~\eqref{pidef} take the form $\beta_{00}=\beta', \beta_{01}=\alpha_S\beta',
\beta_{10}=\alpha_I\beta'$ and $\beta_{11}=\alpha_S \alpha_I \beta'$, whence \eqref{pidef} now yields
\begin{equation}
\label{pi-leaky2}
\pi_{00}= 1-\rm{e}^{-\beta'\ell_0^{(2)}},
\pi_{01}=1-\rm{e}^{-\beta'\alpha_{S}\ell_0^{(2)}}, \pi_{10}=1-\rm{e}^{-\beta'\alpha_{I}\ell_1^{(2)}} \text{and}
\pi_{11}=1-\rm{e}^{-\beta'\alpha_{S}\alpha_{I}\ell_1^{(2)}}.
\end{equation}

Note that the contact rate $\beta'$ between individuals not infected by the first epidemic need not be the same as the contact rate $\beta$ in the first epidemic.

\begin{remark}\label{remleak1}
 We note that here (and in \cite{Bans12}) the term leaky has a slightly different meaning than it usually has in vaccine response models, where it is usually assumed that infectivity is not affected by the vaccine (i.e.\ $\alpha_I=1$).  Our model for leaky partial immunity is analogous to the non-random vaccine response of \cite{BallLyne06},
which is a special case of the vaccine action model introduced by \cite{BeckerStarczak98}.
\end{remark}

\section{Results}\label{sec:Results}

In this section we present the main results of the paper, which provide for the second epidemic (i) a threshold parameter which determines whether a large outbreak is possible, (ii) the probability that a large outbreak occurs and (iii) the fraction of the population that are infected by a large second outbreak.  These results are all conditional upon a large first epidemic, which is assumed implicitly throughout this section.
Strictly speaking, we prove the theorems for the second epidemic on the approximating coloured tree as constructed in Section \ref{sec:coltree}, while we provide a heuristic justification for why using the coloured tree will lead to the corresponding results on a configuration model. Making this justification fully rigorous requires a technical proof similar to proofs in e.g.\ \cite{Ball09,ball_sirl_trapman_2014}, which is beyond the scope of this paper.

To state our first main result we define the following four parameters.
\begin{align}
\mu_{AA}&=\sum_{k=2}^{\infty}(k-1)\tilde{p}_{k}\left(1-(1-p)(1-p+p\tilde{q})^{k-2}\right),\label{eq:4}\\
\mu_{AB}&=\frac{\tilde{q}}{1-\tilde{q}} \sum_{k=2}^{\infty}(k-1)\tilde{p}_{k}\left(1-(1-p+p\tilde{q})^{k-2}\right),\label{eq:5}\\
\mu_{BA}& = \frac{(1-p)(1-\tilde{q})}{\tilde{q}}\displaystyle\sum_{k=2}^{\infty}(k-1)\tilde{p}_{k}(1-p+p\tilde{q})^{k-2},\label{eq:6}\\
\mu_{BB}&= \displaystyle\sum_{k=2}^{\infty}(k-1)\tilde{p}_{k}(1-p+p\tilde{q})^{k-2}\label{eq:7},
\end{align}
where  $\tilde{q}$ is defined as in \eqref{qdef}.
The motivation for the notation is made clear in Section \ref{sec:R0}.

Define the matrix $M$ as follows
\begin{equation}\label{eq:8}
M=\begin{pmatrix}
\pi_{11}\mu_{AA} & p\,\pi_{11}\mu_{AB} &(1-p)\pi_{10}\mu_{AB} \\
\pi_{11}\mu_{BA} & p\,\pi_{11}\mu_{BB} & (1-p)\pi_{10}\mu_{BB} \\
\pi_{01}\mu_{BA} & 0 & \pi_{00}\mu_{BB}
\end{pmatrix} ,
\end{equation}
where for $i,j \in \{0,1\}$, $\pi_{ij}$ is defined in \eqref{pidef} and $p$ is the per pair of neighbours infection probability of the first epidemic. \added{As mentioned in the introduction, we use a three-type branching process to approximate the early spread of the second epidemic.  The matrix $M$ is the mean offspring matrix for that branching process.  The first two types concern individuals that are infected in the first epidemic and the third type individuals that avoided infection in the first epidemic.}  We are now ready to formulate our first main result.

\begin{theorem}\label{thmR0}
Consider the model defined in Section \ref{sec:epidemic}.
Let $R_0^{(2)}$ be the dominant eigenvalue of the matrix $M$. The second epidemic has a strictly positive probability to be major if and only if $R_0^{(2)} >1$.
\end{theorem}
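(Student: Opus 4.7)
The plan is to prove Theorem~\ref{thmR0} via a multi-type branching-process approximation to the early stages of the second epidemic on the coloured tree constructed in Section~\ref{sec:coltree}. First, I would introduce a three-type Galton-Watson process $\mathcal{B}$ whose types reflect the role of a vertex in the bond percolation representation of the first epidemic: types $A$ and $B$ correspond to individuals infected in the first epidemic, distinguished by how the percolation exploration reached them (specifically, $A$ corresponds to reaching the vertex via an edge that was used for transmission in the first epidemic, while $B$ corresponds to reaching it via an edge that was not), and the third type corresponds to individuals who avoided infection in the first epidemic. Transmission in the second epidemic is then superimposed on this structure: an individual of type $i\in\{0,1\}$ infects each of its type-$j$ neighbours independently with probability $\pi_{ij}$ from \eqref{pidef}.

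Second, I would verify that the mean offspring matrix of $\mathcal{B}$ is precisely $M$ from \eqref{eq:8}. The quantities $\mu_{AA},\mu_{AB},\mu_{BA},\mu_{BB}$ in \eqref{eq:4}--\eqref{eq:7} count the expected numbers of each structural edge type available from a vertex whose degree has the size-biased distribution $\tilde D$, conditioned on how it was reached. The key ingredient is that $\tilde{q}$ from \eqref{qdef} is the probability that a size-biased neighbour lies outside the giant percolation cluster, and $(1-p+p\tilde q)^{k-2}$ is the probability that none of the remaining $k-2$ half-edges from such a neighbour carries infection into the giant cluster in the first epidemic. The factors of $p$ and $1-p$ in the off-diagonal entries of $M$ encode whether an edge leading to a first-epidemic-infected vertex was itself used during the first epidemic, and the $\pi_{ij}$ factors encode second-epidemic transmission along it. A conditional decomposition of the local percolation structure yields each of the nine entries of $M$; the zero entry $M_{3,2}$ reflects that a non-infected individual cannot reach a first-epidemic-infected neighbour along an edge whose other type-$B$ structural role requires the edge itself to have been used in the first epidemic.

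Third, I would check that $M$ is non-negative and, under the assumption $R_0^{(1)}>1$ that is implicit throughout Section~\ref{sec:Results}, positive regular, so that Perron-Frobenius theory applies and classical multi-type branching-process dichotomy gives that $\mathcal{B}$ has strictly positive survival probability if and only if its dominant eigenvalue $R_0^{(2)}$ exceeds $1$. To transfer this from $\mathcal{B}$ to the actual second epidemic, I would use an exploration/coupling argument that reveals the configuration model graph only as the second epidemic progresses, and show, as in~\cite{Ball09,ball_sirl_trapman_2014}, that the second epidemic and $\mathcal{B}$ agree for the first $c\log n$ generations with probability tending to one, which is enough time to discriminate between early extinction and a macroscopic outbreak.

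The main obstacle is the correct identification of the two types $A$ and $B$ and the associated bookkeeping that yields \eqref{eq:4}--\eqref{eq:7}. Distinguishing whether a vertex in the giant percolation cluster was reached along an edge that carried the infection in the first epidemic or not, and then tracking the resulting conditional degree distribution of its further neighbours within and outside the cluster, is delicate; once this is handled cleanly the computation of $M$ and the branching-process conclusion are routine. The full rigorous coupling to the configuration model (as opposed to the coloured tree) is the remaining technical challenge, which the authors indicate is beyond the scope of this paper.
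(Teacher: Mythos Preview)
Your overall strategy---approximate the early stages of the second epidemic by a three-type Galton--Watson process on the coloured tree of Section~\ref{sec:coltree}, compute its mean offspring matrix, and invoke the Perron--Frobenius/multi-type branching dichotomy---is exactly what the paper does. However, your identification of the three types is wrong, and with your typing the process is not Markov, so the argument does not go through as written.

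In the paper, the labels $A$ and $B$ are \emph{not} two of the three branching-process types. The three types are defined by the \emph{outgoing} percolation structure at a vertex $u$: type~1 means $u$ has an infinite red path of descent (i.e.\ $u$ is red independently of its parent); type~2 means $u$ is red but has no red path of descent (so $u$ is red only because the edge to its parent is red and the parent is red); type~3 means $u$ is blue. The labels are then $A=$ type~1 and $B=$ types~2 and~3 together; $\mu_{AA},\mu_{AB},\mu_{BA},\mu_{BB}$ in \eqref{eq:4}--\eqref{eq:7} are expected numbers of label-$j$ children of a label-$i$ vertex. The split of a label-$B$ child into type~2 versus type~3 is by the colour of the edge to it (red with probability $p$, blue with probability $1-p$), which is what produces the factors $p$ and $1-p$ in the second and third columns of $M$, and the zero entry $M_{32}$ arises because a blue parent cannot have a type-2 child. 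The reason this typing works is that ``has a red path of descent'' is exactly the information about $u$ that conditions the law of the subtree below $u$; this is what makes the offspring distribution depend only on the type and hence makes $\mathcal{B}$ a genuine Galton--Watson process.

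By contrast, your proposed distinction---$A$ versus $B$ according to whether the \emph{incoming} edge was used for transmission in the first epidemic---does not condition the subtree below $u$ at all, so the offspring distribution would not depend on the type and you would not recover the matrix $M$. Once you replace your typing by the paper's (outgoing red path of descent), the computation of $M$ and the threshold conclusion are exactly as you describe.
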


The next result concerns the probability of a major outbreak of the second epidemic. Our results are, as the notation suggests, strongly related to results from the theory of  branching processes.
We use, for  $i,j \in \{1,2\}$, the notation
\begin{align}
t_{ij}(s_j)& =  1-\pi_{11}+\pi_{11}s_j,\nonumber \\ 
t_{i3}(s_3) &=  1-\pi_{10}+\pi_{10}s_3,\nonumber\\
t_{31}(s_1) &=  1-\pi_{01}+\pi_{01}s_1,\nonumber\\
t_{32}(s_2)& = 0,\nonumber\\
t_{33}(s_{3}) & =  1-\pi_{00}+\pi_{00}s_3,\label{tdefs}
\end{align}
to define the following probability generating functions.
For $i\in\{1,3\}$ and $\mathbf{s}=(s_{1},s_{2},s_{3})\in [0,1]^3$ let
\begin{equation}\label{fifirst}
f_{i}(\mathbf{s}) =  f_{\mathbf{Y}_i}(t_{i1}(s_1),t_{i2}(s_2),t_{i3}(s_3)),
\end{equation}
where
$f_{\mathbf{Y}_i}(\mathbf{s})$ $(i \in \{1,3\})$
is given by
\begin{align}
f_{\mathbf{Y}_1} &=\frac{1}{1-q}f_{D}(\left(1-\tilde{q})s_1+\tilde{q}(p s_2+(1-p)s_3\right))\nonumber\\
&\qquad -\frac{1}{1-q}f_{D}(\left(1-p)(1-\tilde{q})s_1+\tilde{q}(p s_2+(1-p)s_3\right)),\label{eq:1}\\
f_{\mathbf{Y}_3}(\mathbf{s})
&= \frac{1}{q}\displaystyle f_{D}\left((1-p)(1-\tilde{q})s_1+\tilde{q}s_3)\right).\label{eq:1.4}
\end{align}
Furthermore,
let
\begin{equation}
\label{equ:PGFXtilde1}
\tilde{f}_{i}(\mathbf{s})=f_{\mathbf{\tilde{Y}}_i}(t_{i1}(s_1),t_{i2}(s_2),t_{i3}(s_3))\quad(i \in \{1,2,3\}, \mathbf{s} \in [0,1]^3),
\end{equation}
where
\begin{align}
f_{\mathbf{\tilde{Y}}_1}(\mathbf{s})
= & \frac{1}{1-\tilde{q}}\displaystyle f_{\tilde{D}-1}((1-\tilde{q})s_1+\tilde{q}(p s_2+(1-p)s_3)) \nonumber\\
 & - \frac{1}{1-\tilde{q}} \displaystyle f_{\tilde{D}-1}((1-p)(1-\tilde{q})s_1+\tilde{q}(p s_2+(1-p)s_3)),\label{eq:2a}\\
f_{\mathbf{\tilde{Y}}_2}(\mathbf{s})
= &\frac{1}{\tilde{q}} \displaystyle f_{\tilde{D}-1}\left((1-p)(1-\tilde{q})s_1+\tilde{q}(p s_2+(1-p) s_3)\right),\label{eq:2b}\\
f_{\mathbf{\tilde{Y}}_3}(\mathbf{s})
= &\frac{1}{\tilde{q}}\displaystyle f_{\tilde{D}-1}\left((1-p)(1-\tilde{q})s_{1}+\tilde{q}s_{3})\right).\label{eq:2c}
\end{align}

\begin{theorem}\label{thmsurvival}
Consider the model defined in Section \ref{sec:epidemic}.
Assume that $R_0^{(2)}>1$ and that the initial infected individual for the second epidemic is chosen uniformly at random.
Let $\bxi=(\xi_{1},\xi_{2},\xi_{3})$ be the unique solution in $[0,1)^{3}$ of the system of equations
\begin{align*}
\xi_1 &= \tilde{f}_1(\bxi),\\
\xi_2 &= \tilde{f}_2(\bxi),\\
\xi_3 &= \tilde{f}_3(\bxi).
\end{align*}
Then the probability of a large outbreak is, asymptotically as $n \to \infty$, given by $$1-q f_{1}(\bxi)-(1-q)  f_{3}(\bxi).$$
\added{Further, the asymptotic probability of a large outbreak is $1-f_{1}(\bxi)$ if the initial infective was infected in the first epidemic and $1-f_{3}(\bxi)$ if it was not.}
\end{theorem}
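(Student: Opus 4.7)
The plan is to identify the second epidemic, in its early stages, with the three-type Galton--Watson branching process whose mean matrix $M$ was obtained in Theorem~\ref{thmR0}, and then to read off the probability of a large outbreak as $1$ minus the extinction probability of that process. The three types classify a newly infected individual according to its relation to the first epidemic: two types for individuals infected in the first epidemic (distinguished by whether the edge along which they were reached in the second epidemic was, or was not, also a ``used'' edge of the first), and a third type for individuals that escaped the first epidemic. Working on the approximating coloured tree of Section~\ref{sec:coltree}, I can take the branching-process structure as exact and focus entirely on its analysis.

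The first step is to write down the offspring probability generating functions. For a \emph{typical} (size-biased, edge-born) vertex of type~$i$, I would decompose each of its $\tilde D - 1$ further half-edges according to (a) the type of its endpoint, which depends on whether that endpoint lies in the first-epidemic giant cluster and on whether the edge was itself ``used'' in the first epidemic, and (b) whether the second-epidemic contact along that edge succeeds, which happens with probability $\pi_{ij}$ and is captured by the linear map $t_{ij}$ of~\eqref{tdefs}. Bookkeeping through these two steps reproduces the expressions~\eqref{eq:2a}--\eqref{eq:2c} for $f_{\mathbf{\tilde Y}_i}$, and hence $\tilde f_i$ via~\eqref{equ:PGFXtilde1}. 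Standard multi-type Galton--Watson theory then yields: when the process is supercritical ($R_0^{(2)}>1$) and irreducible --- irreducibility being immediate from the positivity pattern of $M$ in~\eqref{eq:8} --- the extinction probability vector is the unique fixed point of the offspring PGF map lying in $[0,1)^3$. This pins down $\bxi$ as that vector, with $\xi_i = \mathbb{P}(\text{extinction}\mid\text{one edge-born type-}i\text{ ancestor})$.

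The initial infective of the second epidemic is chosen uniformly at random and is \emph{not} size-biased, so its offspring PGFs are built from $f_D$ rather than $f_{\tilde D -1}$; an exactly parallel decomposition over neighbour types produces $f_{\mathbf{Y}_i}$ in~\eqref{eq:1}--\eqref{eq:1.4}, and composition with the transmission maps $t_{ij}$ gives $f_i$ as defined in~\eqref{fifirst}. Conditional on a large first outbreak, a uniformly chosen vertex is infected in the first epidemic (and so a type-$1$ initial) with asymptotic probability $1-q$, and uninfected (type-$3$ initial) with probability $q$; this is exactly the conditioning reflected in the normalising prefactors $1/(1-q)$ and $1/q$ in~\eqref{eq:1} and~\eqref{eq:1.4}. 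Averaging the type-conditional extinction probabilities $f_i(\bxi)$ over this mixture and subtracting from~$1$ produces the asserted formula, while the final sentence of the theorem is simply the individual terms of the mixture, before averaging, interpreted as conditional large-outbreak probabilities.

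The genuine obstacle is not the branching-process extinction calculation itself, which is routine once the offspring distributions are in place, but the combinatorial justification of those offspring PGFs: specifically, showing that on the approximating coloured tree the neighbour-type counts of a typical vertex really do take the claimed independent product form, with probabilities $(1-\tilde q)$, $p\tilde q$, $(1-p)\tilde q$ (and the type-$3$ variants) encoding jointly the percolation structure of the first epidemic and the status of a vertex and its neighbours inside versus outside the giant cluster. The subsequent transfer of the result from the coloured-tree branching process to the actual second epidemic on~$\calG$ would proceed via a coupling of the kind used in~\cite{Ball09,ball_sirl_trapman_2014} and, as the paper itself emphasises, is tacitly deferred.
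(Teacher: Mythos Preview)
Your overall strategy---a three-type forward branching process on the coloured tree $\calTst$, with separate offspring PGFs for the root (built from $f_D$) and for non-root individuals (built from $f_{\tilde{D}-1}$), followed by standard multi-type extinction theory---is exactly the paper's argument in Section~\ref{sec:problarge}.

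There is, however, a genuine gap in your description of the three types. The paper's types (Section~\ref{sec:coltree}) classify a red vertex by whether it has an infinite red path of descent \emph{in its own subtree} (type~1 if yes, type~2 if no), not by the colour of the edge from its parent. This is the distinction that makes the process Markovian: a vertex's offspring distribution depends on its label~$A$ or~$B$ (Section~\ref{branproc}), and that label is determined by the subtree below the vertex, hence conditionally independent of its ancestry given the type. The two-term structure of $f_{\mathbf{\tilde{Y}}_1}$ and the normalisations $1/(1-\tilde q)$, $1/\tilde q$ in~\eqref{eq:2a}--\eqref{eq:2c} come from exactly this label conditioning (``at least one label-$A$ child via a red edge'' for label~$A$; ``none'' for label~$B$), cf.~\eqref{OffsAA}--\eqref{OffsBA}. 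Under your ``colour of the incoming edge'' typing, a red vertex reached via a red edge is an unresolved mixture of labels~$A$ and~$B$, and the bookkeeping would not reproduce~\eqref{eq:2a}--\eqref{eq:2c} as you assert. Relatedly, the neighbour-type counts are not i.i.d.\ multinomial with the unconditional weights $(1-\tilde q,\, p\tilde q,\, (1-p)\tilde q)$ you quote: the parent's label imposes a constraint on the joint count, and that constraint is precisely what the subtraction in~\eqref{eq:2a} and the modified arguments in~\eqref{eq:2b}--\eqref{eq:2c} encode.
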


For the model defined in Section \ref{sec:epidemic} we also obtain results for the fraction of vertices that are ultimately infected by the second epidemic. The calculations for this final outcome are very similar to the calculations of the asymptotic (as $n \to \infty$) probability of a large outbreak if the initial infected individual is chosen uniformly at random.

We use (cf. \eqref{tdefs}, for  $i,j \in \{1,2\}$, the notation
\begin{align}
\hat{t}_{ij}(s_j)& =  1-\pi_{11}+\pi_{11}s_j,\nonumber \\ 
\hat{t}_{i3}(s_3) &=  1-\pi_{01}+\pi_{01}s_3,\nonumber\\
\hat{t}_{31}(s_1) &=  1-\pi_{10}+\pi_{10}s_1,\nonumber\\
\hat{t}_{32}(s_2)& = 0,\nonumber\\
\hat{t}_{33}(s_{3}) & =  1-\pi_{00}+\pi_{00}s_3.\label{tdefs2}
\end{align}

Furthermore define
\begin{equation}
\label{equa:PGFXtilde}
\hat{f}_{i}(\mathbf{s})=f_{\mathbf{\tilde{Y}}_i}(\hat{t}_{i1}(s_1),\hat{t}_{i2}(s_2),\hat{t}_{i3}(s_3))\quad(i \in \{1,2,3\}, \mathbf{s} \in [0,1]^3),
\end{equation}
and
\begin{equation}
\label{equa:PGFXcheck}
\check{f}_{i}(\mathbf{s})=f_{\mathbf{Y}_i}(\hat{t}_{i1}(s_1),\hat{t}_{i2}(s_2),\hat{t}_{i3}(s_3))\quad(i \in \{1,3\}, \mathbf{s} \in [0,1]^3).
\end{equation}

\begin{theorem}\label{thmfinalsize}
Consider the model defined in Section \ref{sec:epidemic}.
Assume that $R_0^{(2)}>1$ and that there is a single initial infected individual for the second epidemic.
Let $\hat{\bxi}=(\hat{\xi}_{1},\hat{\xi}_{2},\hat{\xi}_{3})$ be the unique solution in $[0,1)^{3}$ of the system of equations
\begin{align*}
\hat{\xi}_1 &= \hat{f}_1(\hat{\bxi}),\\
\hat{\xi}_2 &= \hat{f}_2(\hat{\bxi}),\\
\hat{\xi}_3 &= \hat{f}_3(\hat{\bxi}).
\end{align*}
Then the fraction of vertices infected in a large outbreak is, asymptotically as $n \to \infty$, given by $$1-q \check{f}_{1}(\hat{\bxi})-(1-q)  \check{f}_{3}(\hat{\bxi}).$$
\added{Further, the fraction of vertices infected in the first epidemic that are also infected in the second epidemic is $1-\hat{\xi}_1$, and the fraction of vertices not infected in the first epidemic that are infected in the second epidemic is $1-\hat{\xi}_3.$}
\end{theorem}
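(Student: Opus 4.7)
The plan is to adapt the forward branching-process approach of Theorem \ref{thmsurvival} by working with the \emph{susceptibility set} (the reverse or dual process). A vertex $v$ is infected in the second epidemic if and only if there is a directed chain of would-be-infectious contacts leading to $v$ from an initially-infectious vertex. Reversing these arrows, one obtains the susceptibility set of $v$, and asymptotically the fraction of ultimately infected vertices equals the probability that a uniformly chosen vertex has a large susceptibility set in the local limit. The Reed-Frost assumption $\mathbb{P}(L=1)=1$ is crucial here: each potential contact is a single Bernoulli trial that can be revealed once and used in either temporal direction \citep{Kuul82, Ball09, Kenah11}.

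First I would construct the reversed three-type branching process on the coloured tree of Section \ref{sec:coltree}. The vertex types are unchanged, since they are determined by position in the first epidemic's percolation cluster. What changes is the role of the transmission probabilities: when exploring backwards from an $i$-individual $v$ through a $j$-neighbour $u$, the relevant event is that $u$ \emph{would have infected} $v$, which has probability $\pi_{ji}$ rather than $\pi_{ij}$. This is precisely the swap of $\pi_{01}$ and $\pi_{10}$ distinguishing $\hat{t}_{ij}$ in \eqref{tdefs2} from $t_{ij}$ in \eqref{tdefs}; the type $1,2$ rates $\pi_{11}$ and type-$3$ self-rate $\pi_{00}$ are invariant under reversal, as they should be. The number-of-children distributions for the reversed process, given the starting type, are still governed by $\mathbf{Y}_i$ and $\mathbf{\tilde Y}_i$ via the colouring, so the probability generating functions take exactly the forms $\hat{f}_i$ and $\check{f}_i$.

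Second, I would identify $\hat{\bxi}$ as the extinction probability vector of the reversed process initiated from a neighbour of each type. Standard multi-type branching-process theory gives that the system $\hat{\xi}_i=\hat{f}_i(\hat{\bxi})$ has a unique solution in $[0,1)^3$ provided the process is supercritical, which follows from $R_0^{(2)}>1$ since the reversed process has mean matrix $M^\top$ (same Perron root as $M$). Then $1-\hat{\xi}_i$ is the probability that a typical $i$-neighbour has an infinite backward tree, and conditioning on the type of a uniformly chosen \emph{root} vertex (degree $D$, not $\tilde D-1$) gives large-susceptibility probability $1-\check{f}_1(\hat{\bxi})$ for a type-$1$ vertex and $1-\check{f}_3(\hat{\bxi})$ for a type-$3$ vertex. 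Weighting by the asymptotic proportions $q$ and $1-q$ of the two classes yields the claimed fraction, and the two conditional statements in the added remark follow by omitting the $q$-weighting.

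Third, I would complete the duality argument: the Reed-Frost coupling implies $\mathbb{P}(v \text{ infected} \mid \text{initial infective } u) = \mathbb{P}(u \in \text{susceptibility set of } v)$, so by exchangeability of $u$ the expected final size, divided by $n$, converges to the asymptotic probability that a uniform vertex has a large susceptibility set. Together with concentration of the final size (variance bounds as in \cite{Ball09, ball_sirl_trapman_2014}), this identifies the limit. The main obstacle, as acknowledged in the paper just before the statement of the theorems, is making the local-weak-limit reduction rigorous: one must show that the joint law of the colouring (generated by the \emph{forward} first epidemic) and the \emph{backward} exploration for the second epidemic on $\mathcal{G}^{(n)}$ converges, in a suitable local sense, to the coloured tree equipped with the reversed three-type process. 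The technicalities are parallel to those for Theorem \ref{thmsurvival} but require care that size-biasing of degrees and the conditional percolation-cluster structure remain consistent under arrow reversal; beyond this, the argument reduces to routine branching-process computations with the generating functions \eqref{equa:PGFXtilde}--\eqref{equa:PGFXcheck}.
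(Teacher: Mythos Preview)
Your approach---susceptibility sets approximated by a backward three-type branching process $\mathcal{B}_B^{(2)}$ on the coloured tree, with the transmission probabilities $\pi_{ij}$ swapped to $\pi_{ji}$ while the structural offspring laws $\mathbf{Y}_i,\mathbf{\tilde Y}_i$ remain unchanged---is exactly the paper's own proof in Section~\ref{sec:finalsize}. One small correction: the backward mean matrix is \emph{not} $M^\top$ but rather the matrix obtained from $M$ in~\eqref{eq:8} by interchanging $\pi_{01}$ and $\pi_{10}$; this matrix is similar to $M$ via the diagonal conjugation $\mathrm{diag}(1,1,\pi_{10}/\pi_{01})$ (compare the argument for $M_{\rm leaky}$ and $M_{\rm pol}$ in Section~\ref{leaky:example}), which is what actually delivers the equality of Perron roots and hence supercriticality of the backward process under $R_0^{(2)}>1$.
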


\added{Recall from Section~\ref{sec:epigra} that the fraction of the population infected by a large first epidemic is $1-q$.  Thus, Theorem~\ref{thmfinalsize} implies that conditional upon both epidemics being large, as $n \to \infty$, the probability an individual chosen uniformly at random from the population avoids both epidemics converges to $q\check{\xi}_3$, the probability it is infected by the first epidemic but not the second converges to $(1-q)\check{\xi}_1$, the probability it is infected by the second epidemic but not the first converges to $q(1-\check{\xi}_3)$ and the probability it is infected by both epidemics converges to $(1-q)(1-\check{\xi}_1)$.}

The two models for leaky partial immunity described in Section~\ref{sec:leakydef} are special cases of the model for the second epidemic in Section~\ref{subsec:secondepi}, so Theorems~\ref{thmR0}-\ref{thmfinalsize} hold for those models, with $\pi_{ij}$ $(i,j \in \{0,1\})$ being given by~\eqref{pi-leaky1}
or~\eqref{pi-leaky2} depending on the model.  As indicated in Section~\ref{sec:poldef} above, the model for polarized partial immunity described there does not fit the framework of Section~\ref{subsec:secondepi}.  However, the proofs of Theorems~\ref{thmR0}-\ref{thmfinalsize} can be adapted to obtain the following result.

\begin{theorem}\label{thmpol}
Consider the model for polarized partial immunity defined in Section~\ref{sec:poldef} and let $\pi_{ij}$ $(i,j \in \{0,1\})$ be given by~\eqref{pi-pol}. Then Theorems~\ref{thmR0} and~\ref{thmsurvival} hold for this model.  Theorem~\ref{thmfinalsize} holds also, provided $\hat{f}_{i}(\mathbf{s})$ $(i \in\{1,2,3\})$ at~\eqref{equa:PGFXtilde} are replaced by
\[
\hat{f}_{i}(\mathbf{s})=1-\alpha+\alpha f_{\tilde{\mathbf{Y}}_i}(1-p'+p's_1,1-p'+p's_2,1-p'+p's_3)\quad(i \in \{1,2\})
\]
and
\[
\hat{f}_{3}(\mathbf{s})=f_{\tilde{\mathbf{Y}}_3}(1-p'+p's_1,1-p'+p's_2,1-p'+p's_3);
\]
and $\check{f}_{i}(\mathbf{s})$ $(i \in \{1,3\})$ at~\eqref{equa:PGFXcheck} are replaced by
\[
\check{f}_1(\mathbf{s})=1-\alpha+\alpha f_{\mathbf{Y}_1}(1-p'+p's_1,1-p'+p's_2,1-p'+p's_3)
\]
and
\[
\check{f}_3(\mathbf{s})=f_{\mathbf{Y}_3}(1-p'+p's_1,1-p'+'ps_2,1-p'+p's_3).
\]
\end{theorem}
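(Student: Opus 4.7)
The plan is to split the claim along the same lines as the theorem statement itself: the assertions about $R_0^{(2)}$ and the probability of a major second outbreak reduce directly to Theorems~\ref{thmR0} and~\ref{thmsurvival} with the substitution \eqref{pi-pol}, whereas the modification of Theorem~\ref{thmfinalsize} requires a genuinely different argument because, in the large outbreak regime, a type-1 individual is contacted many times but its immunity status is fixed, not independently resampled per contact.

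First I would handle the forward (branching process) part. The three-type branching process approximating the early stages of the second epidemic, as constructed in Section~\ref{sec:R0}, only encounters vertices that have not been visited previously. Consequently, each time the process contacts a vertex that was infected in the first epidemic, the polarized immunity coin of that vertex is independent of the history of the process. The event that such a contact results in infection therefore factorises as immunity-status times transmission, with overall probability $\alpha p'$, which is precisely the common value of $\pi_{11}$ and $\pi_{01}$ in \eqref{pi-pol}; contacts with type-3 vertices are unaffected. Hence the mean offspring matrix $M$ in \eqref{eq:8}, the solution $\bxi$, and the pgfs $f_i,\tilde{f}_i$ in \eqref{fifirst}--\eqref{equ:PGFXtilde1} are valid as they stand, and Theorems~\ref{thmR0} and~\ref{thmsurvival} transfer to the polarized model verbatim.

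Next I would prove the modification of Theorem~\ref{thmfinalsize}. The idea is a two-stage coupling: independently of the graph and the infection clocks, first reveal the immunity indicator of every vertex infected in the first epidemic (immune with probability $1-\alpha$), and then run the second epidemic with per-edge transmission probability $p'$ on every edge, using the auxiliary rule that a neighbour of an infective is actually infected only if it is either of type~3, or of type~1 and non-immune. With this construction the backward (susceptibility set) analysis that underlies Theorem~\ref{thmfinalsize} applies on the three-type coloured tree of Section~\ref{sec:coltree}, provided that at each type-1 node of the susceptibility exploration we condition on its (independent) immunity indicator: with probability $1-\alpha$ the node is terminally uninfected and contributes a factor $1$ to the pgf, and with probability $\alpha$ it contributes the usual pgf of the number of back-traced neighbours but now with all $\pi_{ij}$ replaced by $p'$ (since the $\alpha$-discount has been separated out). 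This yields the fixed-point equations
\[
\hat{\xi}_i = (1-\alpha)+\alpha f_{\tilde{\mathbf{Y}}_i}(1-p'+p'\hat{\xi}_1,1-p'+p'\hat{\xi}_2,1-p'+p'\hat{\xi}_3) \quad (i\in\{1,2\}),
\]
while type-3 vertices are never subject to polarized immunity, so $\hat{\xi}_3$ satisfies the unmodified equation. The same accounting applied to the root of the exploration (an individual sampled uniformly among those infected, respectively not infected, in the first epidemic) gives the replacements for $\check{f}_1$ and $\check{f}_3$, and the expression $1-q\check{f}_1(\hat{\bxi})-(1-q)\check{f}_3(\hat{\bxi})$ for the asymptotic fraction infected then follows as in Theorem~\ref{thmfinalsize}.

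The main obstacle I expect is rigorously justifying the two-stage coupling, i.e.\ showing that revealing the immunity indicators prior to the susceptibility exploration does not introduce spurious dependencies in the coloured tree approximation of Section~\ref{sec:coltree}. Concretely one must verify that the polarized final size on the configuration model has the same asymptotic law as the thinned-plus-immunity construction above, and that the dual three-type branching process remains a faithful local limit once the independent Bernoulli$(\alpha)$ layer is grafted onto the type-1 vertices. Both points follow from the independence of the immunity flips from the network and the first-epidemic history, but the argument parallels (and mildly extends) the technical steps alluded to in the discussion of \cite{Ball09,ball_sirl_trapman_2014}, and is the only nontrivial addition beyond the proofs of Theorems~\ref{thmR0}--\ref{thmfinalsize}.
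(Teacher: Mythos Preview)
Your proposal is correct and follows essentially the same route as the paper. For Theorems~\ref{thmR0} and~\ref{thmsurvival} you invoke exactly the paper's argument: on the tree each vertex in the forward process is contacted only once (by its parent), so the polarized immunity coin can be absorbed into the transmission probability, yielding~\eqref{pi-pol}. For the final-size part, your two-stage coupling (reveal the Bernoulli$(\alpha)$ immunity flags first, then explore the susceptibility set with uniform per-edge probability $p'$) is precisely the paper's conditioning on whether the vertex $u$ is immune, just phrased differently; both lead to the same conditional PGF $1-\alpha+\alpha(1-p'+p's_1)^{\tilde Y_{i1}}(1-p'+p's_2)^{\tilde Y_{i2}}(1-p'+p's_3)^{\tilde Y_{i3}}$ for $i\in\{1,2\}$ and hence to the stated $\hat f_i$ and $\check f_i$. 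One small slip in your prose: you write ``type-1 node'' where you mean ``type-1 or type-2 node'' (both are red vertices and hence subject to polarized immunity), though your displayed equations for $i\in\{1,2\}$ get this right.
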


\section{Examples: Modelling infection-acquired immunity}
\label{examples}
In this section we use our theory to investigate herd immunity and the potential of a second wave of an epidemic (cf.\ \cite{Brit20}) in a population structured by a configuration model network. We further analyse epidemics with polarized and leaky partial immunity  caused by the first epidemic that were previously studied through an approximation by \cite{Bans12}.

\subsection{The herd immunity threshold after a first outbreak}

When an epidemic spreads through a population, often measures are taken to limit the impact of the disease. If a vaccine is available this can be done through vaccinating (a part of) the population. If a vaccine is not available, non-pharmaceutical interventions have to be taken in order to limit the number of contacts of individuals and/or the probability that a contact leads to transmission of the epidemic.

To model this, we first assume that infection confers complete immunity, so $\pi_{11}= \pi_{01}=0$ and our model reduces to that of \cite{newman2005threshold}.
The second epidemic involves only \replaced{individuals who were  not infected by the first epidemic,}{type-$3$ individuals} and its early stages are approximated by a single-type branching process, whose offspring mean
\deleted{(except for the root)} is $R_0^{(2)} =\pi_{00} \mu_{BB}$ \replaced{(cf.~the bottom-right element of the matrix $M$ defined in~\eqref{eq:8})}
{(cf.~\replaced{Theorem \ref{thmR0}}{\eqref{eq:8})}}.  Thus, using~\eqref{eq:7},
\begin{equation}
\label{eq:R02HI}
R_0^{(2)}=\pi_{00} \sum_{k=2}^{\infty}(k-1) \tilde{p}_k (1-p+p\tilde{q})^{k-2} =  \pi_{00} \mathbb{E}[(\tilde{D}-1)  (1-p+p\tilde{q})^{\tilde{D}-2}],
\end{equation}
where $\tilde{q}$ is the smallest positive solution of \eqref{qdef}.

Now assume that $\pi_{00}$ is the probability that an infectious individual makes an infectious contact with a given neighbour in an unrestricted epidemic and
$$R_0'=\pi_{00}  \mathbb{E}[\tilde{D}-1] >1,$$
so if there was no\added{ large} first epidemic, an unrestricted epidemic is supercritical. We investigate the threshold value of $p$, the infection probability for the first epidemic, for which herd immunity is obtained, that is, the value of $p$ for which $R_0^{(2)}=1$ holds. We then compute the corresponding value of $1-q$, the fraction of the population that is no longer susceptible.
In a homogeneously mixing population \replaced{the latter value is}{both of these values are} $1-1/R^{\text{hom}}_0$, where $R^{\text{hom}}_0$ is the basic reproduction number for the unrestricted epidemic in a homogeneously mixing population \cite[p.~69]{diekmann2012mathematical}.

Note that $R_0^{(2)}$ is non-increasing in $p$ (because increasing $p$ leaves fewer individuals susceptible after the first outbreak). Further, for $p \mathbb{E}[\tilde{D}-1]>1$, it follows from~\eqref{qdef} that $\tilde{q}$ is strictly decreasing and continuous in $p$ and hence, using~\eqref{eq:R02HI}, so is $R_0^{(2)}$.
Let $p'$ be chosen such that $R_0^{(2)} =1$, and let $\tilde{q}'$ and $q'$ be the corresponding $\tilde{q}$ and $q$, respectively. Then we obtain
\begin{equation}\label{HIR02}
1 =\pi_{00} \mathbb{E}[(\tilde{D}-1)  (1-p'+p'\tilde{q}')^{\tilde{D}-2}]
=  \pi_{00} \frac{ \mathbb{E}[D(D-1)  (1-p'+p'\tilde{q}')^{D-2}]}{\mathbb{E}[D]}
\end{equation}
and, using \eqref{qdef0},
\begin{equation*}\label{HIq}
q' =\mathbb{E}[(1-p'+p'\tilde{q}')^{D}].
\end{equation*}
Further, note that $$R'_0= \pi_{00}  \mathbb{E}[\tilde{D}-1] =  \pi_{00}  \frac{\mathbb{E}[D(D-1)]}{\mathbb{E}[D]}.$$

In general it is not possible to give a closed-form expression for $q'$ in terms of ``standard functions''.
However, suppose that $D$ has a Poisson-type distribution, i.e.\ its PGF satisfies $f'_D(s) = f_D'(1) \left(f_D(s)\right)^{\kappa}$ for some $\kappa \in (0,\infty)$ \citep{Jaco18}.
Then, $f_{\tilde{D}-1}(s)=\left(f_D(s)\right)^{\kappa}$ and $f_{\tilde{D}-1}'(s)=\kappa\left(f_D(s)\right)^{2\kappa-1}\mu_D$. Further, it follows from~\eqref{qdef0} and~\eqref{qdef} that $\tilde{q}=q^{\kappa}$, so $f_D(1-p+p\tilde{q})=q^{1/\kappa}$. Hence, $R'_0=\pi_{00}\kappa \mu_D$ and the first equality in~\eqref{HIR02} yields after a little algebra that
\begin{equation}
\label{qPoissontype}
q'=(1/R_0')^{1/(2\kappa-1)},
\end{equation}
so we get an explicit expression for the disease-induced herd immunity level.  Moreover, note that $1-q' < 1-1/R_0'$ if and only if $\kappa>1$.  Thus, the herd immunity threshold obtained through letting a less infectious disease spread through the population is strictly less than the herd immunity level obtained through immunisation of a uniformly chosen subset  of the vertices of an appropriate size if and only if $\kappa >1$ (cf.~\cite{Brit20}).

Some examples of Poisson-type distributions are
(i) $D \sim {\rm Bin}(n, \theta)$, in which case $\kappa= (n-1)/n$, so $q'=(1/R_0')^{n/(n-2)}$,
(ii) $D \sim {\rm NegBin}( r, \theta)$ (with support $\mathbb{Z}_{\geq 0}$), in which case $\kappa= (r+1)/r$, so $q'=(1/R_0')^{r/(r+2)}$,
(iii) $D \sim {\rm Poisson}(\lambda)$, in which case $\kappa=1$, so $q'=1/R_0'$. Note that $\mathbb{P}(D=n)=1$ for some integer $n$, is a special case of $D \sim {\rm Bin}(n, \theta)$, with $\theta=1$.

Returning to the general case, observe that, $1-q' \leq 1-1/R_0'$ if and only if
\begin{equation}\label{Herdmaster}
\mathbb{E}[D(D-1)  (1-p'+p'\tilde{q}')^{D-2}]\leq \mathbb{E}[D(D-1)]\mathbb{E}[(1-p'+p'\tilde{q}')^{D}].\end{equation}
Further, if $D$ has a Mixed-Poisson distribution with mean distribution $X$, i.e.
\[
\mathbb{P}(D=k)=\mathbb{E}[{\rm e}^{-X}X^k]/k! \quad (k=0,1,\dots)
\]
for some non-negative random variable $X$,
then \eqref{Herdmaster} reads
$$\mathbb{E}[X^2 {\rm e}^{-p'(1-\tilde{q}')X}]  \leq \mathbb{E}[X^2] \mathbb{E}[{\rm e}^{-p'(1-\tilde{q}')X}],$$
which is  always the case by Chebychev's other inequality \cite[p.168]{Hardy1952}, because $x^2$ is increasing for $x >0$ and ${\rm e}^{-p'(1-\tilde{q}')x }$ is decreasing for $x>0$. Note that the Poisson distribution is a special case of a Mixed-Poisson distribution.

\subsection{Polarized partial immunity}
\label{subsec:polarBM}

In this subsection we investigate the polarized partial immunity model of Section~\ref{sec:poldef} when $p'=p$, so apart from some individuals being immune in the second epidemic the second disease is the same as the first.   First, in Section~\ref{subsubsec:critalpha}, we study how the reproduction number $R_0^{(2)}$ for the second epidemic  depends on the probability $\alpha$ that an individual infected by the first epidemic is susceptible to the second epidemic, and determine the critical value of $\alpha$ so that there can be no large second epidemic.  Then, in Section~\ref{subsubsection:examplenetworks}, we use two concrete examples of networks to explore the accuracy of the approximation of $R_0^{(2)}$ derived by \cite{Bans12}.

\subsubsection{Critical value of $\alpha$}
\label{subsubsec:critalpha}
\replaced{Consider}{We now study} $R_0^{(2)}$ \added{as given in Theorem \ref{thmpol}} and \deleted{we} use the notation $R_0^{(2)}(\alpha)$ \added{ and $M(\alpha)$ for the corresponding matrix $M$} to stress the dependence of this reproduction number \added{and matrix }on $\alpha$.  We have trivially that $R_0^{(2)}(1)= R_0^{(1)}$. On the other hand, $R_0^{(2)}(0)<1$, because if $\alpha =0$, the second epidemic spreads among the vertices which were susceptible after the first epidemic and the first epidemic underwent a large outbreak by assumption, so it must be subcritical during its final stage \citep{newman2005threshold}.
Furthermore, from the definition of $R_0^{(2)}(\alpha)$ it is easy to deduce that $R_0^{(2)}(\alpha)$ is a continuous\added{ strictly} increasing function of $\alpha$ and therefore
$$\alpha_{c}=\inf \{\alpha\in[0,1] : R^{(2)}_{0}(\alpha)>1\}$$
is the unique solution of $R^{(2)}_{0}(\alpha)=1$, which takes its value in $(0,1)$. Recall that we assume that $R_0^{(1)} >1$.

\begin{lemma}\label{lemalphacrit}
For polarized partial immunity, $\alpha_c$ is given by the smallest solution in $[0,1]$ of the quadratic equation
\begin{multline}\label{eq:3.1}
p^{3}(1-p\mu_{BB})(\mu_{AA}\mu_{BB}-\mu_{AB}\mu_{BA})\alpha^{2}\\
-p\left[(1-p\mu_{BB})(\mu_{AA}+p\mu_{BB})+p(1-p)\mu_{AB}\mu_{BA}\right]\alpha+1-p\mu_{BB}=0.
\end{multline}
\end{lemma}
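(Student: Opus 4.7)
My plan is to characterise $\alpha_c$ as a root of the characteristic polynomial of $M(\alpha)$ evaluated at $\lambda=1$, and then argue it is the smallest such root.

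First I would substitute the polarized partial immunity parameters with $p'=p$ into the matrix $M$ of~\eqref{eq:8}. From~\eqref{pi-pol} we have $\pi_{00}=\pi_{10}=p$ and $\pi_{01}=\pi_{11}=\alpha p$, so
\[
M(\alpha)=\begin{pmatrix}
\alpha p\,\mu_{AA} & \alpha p^{2}\mu_{AB} & (1-p)p\,\mu_{AB}\\
\alpha p\,\mu_{BA} & \alpha p^{2}\mu_{BB} & (1-p)p\,\mu_{BB}\\
\alpha p\,\mu_{BA} & 0 & p\,\mu_{BB}
\end{pmatrix}.
\]
By Theorem~\ref{thmR0}, $R_{0}^{(2)}(\alpha)=1$ is equivalent to $1$ being an eigenvalue of $M(\alpha)$, i.e.\ to $\det(M(\alpha)-I)=0$.

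Next I would evaluate this determinant by cofactor expansion along the third row, exploiting the zero entry in position $(3,2)$. The $(3,1)$-minor is the $2\times 2$ determinant
$\det\!\begin{pmatrix}\alpha p^{2}\mu_{AB} & (1-p)p\,\mu_{AB}\\ \alpha p^{2}\mu_{BB}-1 & (1-p)p\,\mu_{BB}\end{pmatrix}$, in which the $\alpha p^{3}(1-p)\mu_{AB}\mu_{BB}$ contributions cancel, leaving simply $(1-p)p\,\mu_{AB}$. The $(3,3)$-minor expands to $\alpha^{2}p^{3}(\mu_{AA}\mu_{BB}-\mu_{AB}\mu_{BA})-\alpha p(\mu_{AA}+p\mu_{BB})+1$. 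Combining these and multiplying by $-1$ yields exactly the left-hand side of~\eqref{eq:3.1}, so any $\alpha$ with $R_{0}^{(2)}(\alpha)=1$ must satisfy that quadratic.

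Finally I would explain why $\alpha_{c}$ is the smallest root of~\eqref{eq:3.1} in $[0,1]$. The matrix $M(\alpha)$ is non-negative and each of its entries is non-decreasing (strictly increasing where non-zero in $\alpha$) in $\alpha$, so by Perron--Frobenius $R_{0}^{(2)}(\alpha)$ is strictly increasing and continuous in $\alpha$. Together with $R_{0}^{(2)}(0)<1$ and $R_{0}^{(2)}(1)=R_{0}^{(1)}>1$ (stated just before the lemma), this forces a unique $\alpha_{c}\in(0,1)$ at which the dominant eigenvalue equals $1$. For any $\alpha>\alpha_{c}$ the value $1$ may still be a (sub-dominant) eigenvalue of $M(\alpha)$ and so yield another root of~\eqref{eq:3.1}, but any root smaller than $\alpha_{c}$ is excluded since $R_{0}^{(2)}$ is below $1$ there and $1$ cannot be an eigenvalue of the dominated matrix. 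Hence $\alpha_{c}$ is the smallest root of the quadratic in $[0,1]$, as claimed.

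The main obstacle is the bookkeeping in the determinant calculation: one has to spot the cancellation in the $(3,1)$-minor so that the expression collapses to a true quadratic in $\alpha$ (rather than a cubic), and keep track of signs so that the coefficients land in the form stated in~\eqref{eq:3.1}. The monotonicity/smallest-root argument is then short, provided Perron--Frobenius is invoked to justify that $R_{0}^{(2)}(\alpha)$ (not just some eigenvalue) is monotone.
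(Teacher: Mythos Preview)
Your proposal is correct and follows essentially the same route as the paper: both reduce the condition $R_0^{(2)}(\alpha)=1$ to ``$1$ is an eigenvalue of $M(\alpha)$'', i.e.\ $\det(M(\alpha)-I)=0$, which yields the quadratic~\eqref{eq:3.1}, and both then use the strict monotonicity of $R_0^{(2)}(\alpha)$ together with $R_0^{(2)}(0)<1<R_0^{(2)}(1)$ to pin down $\alpha_c$ as the smallest root in $[0,1]$. Your Perron--Frobenius argument (for $\alpha<\alpha_c$ the spectral radius is below $1$, so $1$ cannot be any eigenvalue) is a slightly cleaner way to exclude smaller roots than the paper's two-root case analysis, but the content is the same.
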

\begin{proof}
In order to find $\alpha_c$, we note that for polarized partial immunity \added{(with $p'=p$)} at least one eigenvalue of \added{$M(\alpha)$ in }\eqref{eq:8} is 1, if and only if $\alpha$ satisfies~\eqref{eq:3.1}.
The quadratic equation \eqref{eq:3.1} has at least one root in $[0,1]$, since $\alpha_c\in (0,1)$. If it has two roots in $[0,1]$, $\alpha_1 <\alpha_2$ say, then  $R^{(2)}_0(\alpha_1) \geq 1$\added{ (because $R^{(2)}_0(\alpha_1)$ is the largest eigenvalue of $M(\alpha_1)$, while 1 is an eigenvalue of $M(\alpha_1)$)}
and $R^{(2)}_0(\alpha_2)> R^{(2)}_0(\alpha_1)$, as $R^{(2)}_0(\alpha)$ is increasing in $\alpha$.
Thus, $R^{(2)}_0(\alpha_1)=1$, since otherwise $R^{(2)}_0(\alpha) \neq 1$ for all $\alpha \in [0,1]$, contradicting $\alpha_c \in (0,1)$. Hence, $\alpha_c$ is the smallest root of \eqref{eq:3.1} in $[0,1]$.
\end{proof}

\subsubsection{Example networks}
\label{subsubsection:examplenetworks}
In the first example we consider the regular random graph in which all vertices have degree 3 (i.e.\ $\mathbb{P}(D=3)=1$). In the second example
we assume that for some given $k \in \{2,3,\cdots\}$, $\mathbb{P}(\tilde{D}=1) = \mathbb{P}(\tilde{D}=k)=1/2$.  So half of the total degree can be attributed to individuals of degree 1, and the other half to vertices of degree $k$. Furthermore, we assume in this second model that there is no recovery from infection, i.e.\ $p=1$. For both models we compare
the \replaced{the asymptotically exact threshold parameter $R_0^{(2)}$ given in Theorem~\ref{thmpol} and the corresponding $\alpha_c$}{exact asymptotic results from Section \ref{sec:polfinal}} with the approximations derived by \cite{Bans12} (see Appendix \ref{App1}). We note that our second example was chosen to highlight that these approximations may differ considerably from the exact results.

In our first example we assume $p\in (1/2,1]$, in order to guarantee that $R_0^{(1)}>1$ (see \eqref{firstR0}).
 By using $\mathbb{P}(D=3)=1$ in \eqref{eq:4}-\eqref{eq:7} we obtain
 $\mu_{AA}=2(1-\frac{(1-p)^{2}}{p})$,
 $\mu_{AB}=\frac{2(1-p)^{2}}{p}$,
 $\mu_{BA}=\frac{2(2p-1)}{p}$ and
 $\mu_{BB}=\frac{2(1-p)}{p}$.
 Substituting these values into \eqref{eq:3.1} and recalling $p\in (1/2,1]$ yields $\alpha_c$ satisfies $g(\alpha_c)=0$, where
\[
g(x)= [4p^{3}(1-p)x^{2}-2(1-2p+4p^2-2p^3)x+1].
\]
Note that $g(x)$ has a minimum, $g(0)=1>0$ and $g(1) =-[p^2(1-p)^2 +(2p-1)^2]<0$.
So, in this example, the epidemic threshold $\alpha_{c}$ is the unique solution in [0,1] (and the smallest solution on $[0,\infty)$)  of $g(\alpha_c)=0$, which is given by
$$\alpha_{c}=\frac{(1-2p+4p^{2}-2p^{3})-\sqrt{(1-2p+4p^{2}-2p^{3})^{2}-4p^{3}(1-p)}}{4p^{3}(1-p)}.$$
We depict $\alpha_c$ together with the corresponding approximating value from \cite{Bans12} as a function of $p$ in Figure \ref{fig:3}.

In Figure \ref{fig:3} we also plot $R_0^{(2)}(2/3)$ and $R_0^{\text{BM}}(2/3)$ (which is the approximation  from \cite{Bans12} as deduced in \eqref{BMR0} in the Appendix)   as functions of $p$. Note that the model by \cite{Bans12} overestimates $\alpha_{c}$ for this regular network.
We also see that, at least on this specific network, the approximation of \cite{Bans12} is worst for small $p$, but overall offers a reasonable approximation for these parameter values.

\begin{figure}[t]
\centering
\resizebox{0.49\textwidth}{!}{\includegraphics{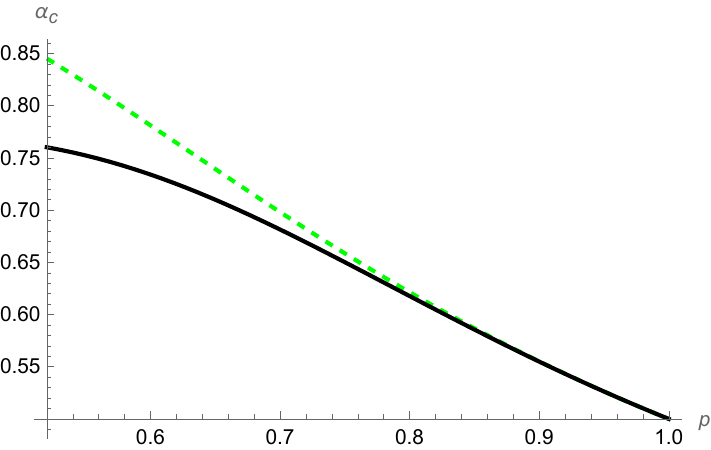}}
\resizebox{0.49\textwidth}{!}{\includegraphics{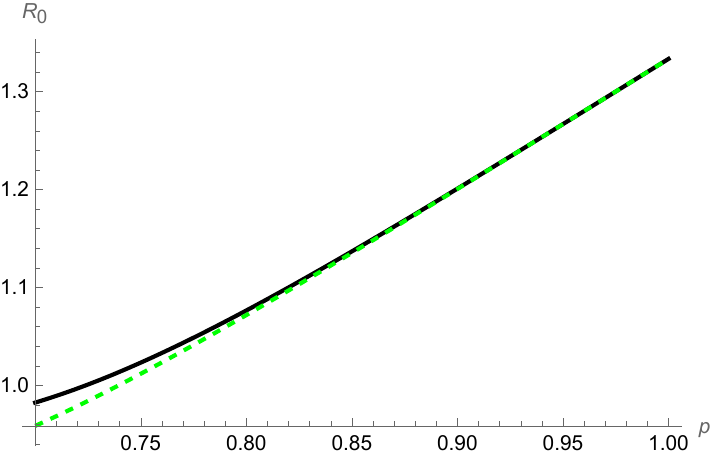}}
\caption{The critical value $\alpha_{c}$ (left) and the basic reproduction number (right) for polarized partial immunity  as a function of the transmission probability $p$.
In both figures the exact values
as deduced in this paper (solid black line) and the approximation of this quantity as given by  \cite{Bans12} (green dashed line) are given for  $\mathbb{P}(D=3)=1$. In the right figure $\alpha=2/3$.}
\label{fig:3}
\end{figure}

In our second example, the constructed graph contains many pairs of neighbours, which have no other neighbours, while if $k$ is large most of the other vertices are in one giant component. Note that pairs of neighbours of degree 1 play no role in the spread of a large epidemic, whether it is the first or second epidemic.

Also for large  $k$, $\tilde{q} \approx 1/2$ and $R_0^{(2)}(\alpha) \approx (k-1) \alpha/2$.
In Figure \ref{fig:5} we compare $R_0^{(2)}(\alpha)$ with $R_0^{\text{BM}}(\alpha)$.  We see in this example (which is constructed exactly for this purpose), that  the difference between the exact  $R_0^{(2)}(\alpha)$ and the approximation $R_0^{\text{BM}}(\alpha)$ may be considerable. In particular, the approximation of \cite{Bans12} yields $\alpha_c \approx 0.38$, while in fact the true value is approximately $2/(k-1) =0.2$.

The intuition behind the approximation failing in this particular case, is that \cite{Bans12} implicitly assume that the network is ``rewired'' among just the ``survivors'' from the first epidemic before the second epidemic starts. Since vertices of degree 1  always end paths of infection, the rewiring means that some vertices of degree 1 that were neighbours of each other before rewiring now block long paths in the network on which the second epidemic spreads, which leads to a lower reproduction number for the second epidemic.

\begin{figure}[t]
\centering
\includegraphics[width=0.5\textwidth]{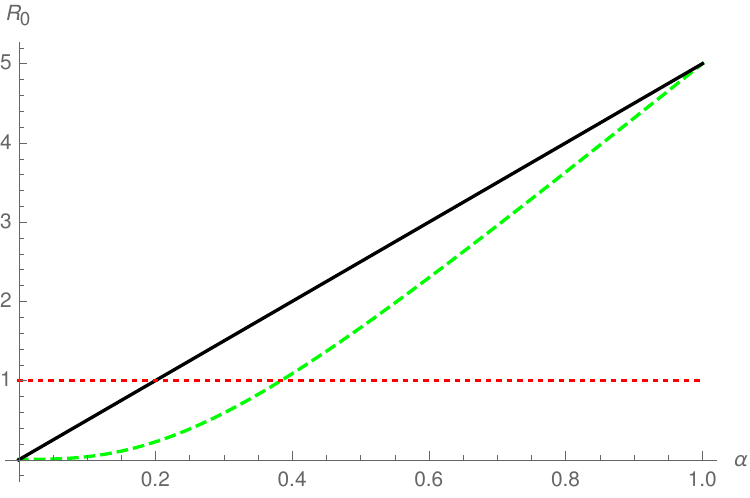}
\caption{The basic reproduction number for polarized partial immunity as a function of $\alpha$, the fraction of individuals infected in the first epidemic that is susceptible again for the second epidemic,  as deduced in this paper (solid black line) and the approximation of this quantity as given in  \cite{Bans12} (green dashed line). Here $\mathbb{P}(\tilde{D}=1)=\mathbb{P}(\tilde{D}=11)=1/2$ and $p=1$.}
\label{fig:5}       
\end{figure}

\subsection{Leaky partial immunity}
 \label{leaky:example}

In this subsection we consider similar examples to those in Section~\ref{subsec:polarBM}, using the model for leaky partial immunity obtained by setting $p'=p$ in~\eqref{pi-leaky1}.  Using~\eqref{eq:8}, the offspring mean matrix for the second epidemic is
\begin{equation}
\label{equ:Mleaky}
M_{\rm leaky}=p\begin{pmatrix}
\alpha_S \alpha_I\mu_{AA} & p \alpha_S \alpha_I \mu_{AB} &(1-p)\alpha_I\mu_{AB} \\
\alpha_S \alpha_I \mu_{BA} & p \alpha_S \alpha_I \mu_{BB} & (1-p)\alpha_I\mu_{BB} \\
\alpha_S\mu_{BA} & 0 & \mu_{BB}
\end{pmatrix}.
\end{equation}
The corresponding mean matrix under the model for polarized partial immunity defined in Section~\ref{sec:poldef}, with $p'=p$, is
\begin{equation}
\label{equ:Mpolarized}
M_{\rm pol}=p\begin{pmatrix}
\alpha \mu_{AA} & p \alpha \mu_{AB} &(1-p)\mu_{AB} \\
\alpha \mu_{BA} & p \alpha_S \mu_{BB} & (1-p) \mu_{BB} \\
\alpha \mu_{BA} & 0 & \mu_{BB}
\end{pmatrix} .
\end{equation}
Suppose that $\alpha=\alpha_S \alpha_I$ and $\alpha_I>0$, and let $\Delta$ be the $3 \times 3$ diagonal matrix with successive diagonal elements $1, 1, \alpha_I^{-1}$.  Then $M_{\rm leaky}=\Delta  M_{\rm pol} \Delta^{-1}$, so $M_{\rm leaky}$ and $M_{\rm pol}$ have the same eigenvalues.  It follows that the basic reproduction number $R_0^{(2)}$ for the second epidemic is the same under both the leaky and polarized models for partial immunity, and hence so is the critical value $\alpha_c$.

For computations, \cite{Bans12} again use an approximation in which they ``reconfigure'' the edges in the configuration model between the first and second epidemic. This reconfiguring is performed in such a way that whether or not a vertex is infected in the first epidemic, as well as its number of infected neighbours in the first epidemic are not changed. \cite{Bans12} analyse the second epidemic (on the reconfigured network) using a two-type bond percolation model.  Although this approach yields an analysis that is exact in the large population limit for an epidemic on the reconfigured network, it does not lead to an asymptotically exact analysis of the second epidemic as the reconfigured network does not yield an asymptotically correct model for the true network following the first epidemic.  The approximating two-type bond percolation model is described in Appendix \ref{App2}, where
the corresponding value of $\alpha$, which we denote by $\alpha_c^{BM}$, so that the second epidemic is critical is derived.   As shown in Apppendix \ref{App2}, the joint-degree distributions given in~\cite{Bans12} that underpin their analysis of the two-type bond percolation model are incorrect. The correct distributions are used in the examples below. This second approximation in \cite{Bans12} can also be used for polarized partial immunity and under it $R_0^{(2)}$, and hence also $\alpha_c^{BM}$, is the same for both leaky and polarized partial immunity when $\alpha_S \alpha_I =\alpha$ (see Appendix \ref{App2}).

\begin{figure}[t]
\begin{center}
\resizebox{0.49\textwidth}{!}{\includegraphics{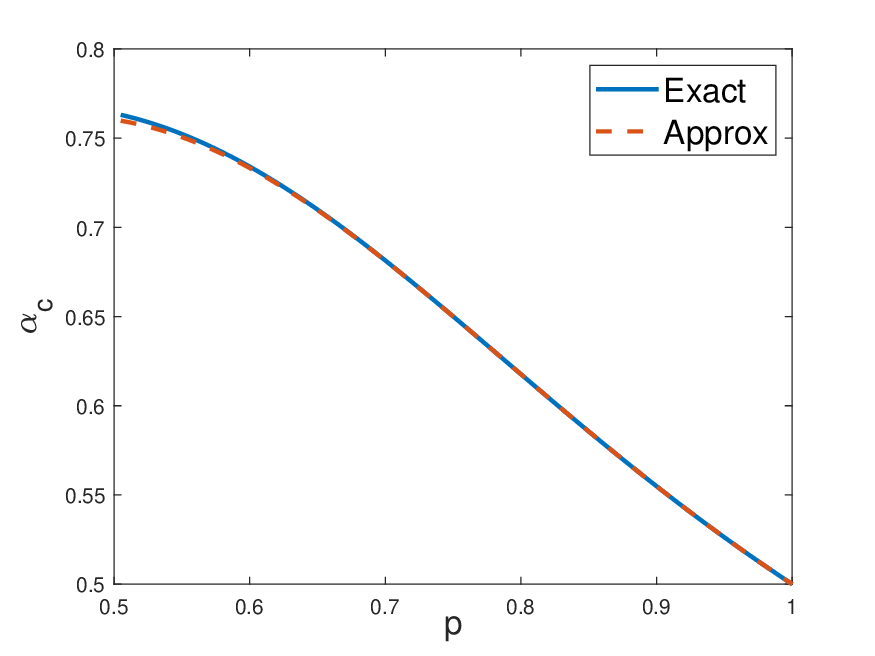}}
\resizebox{0.49\textwidth}{!}{\includegraphics{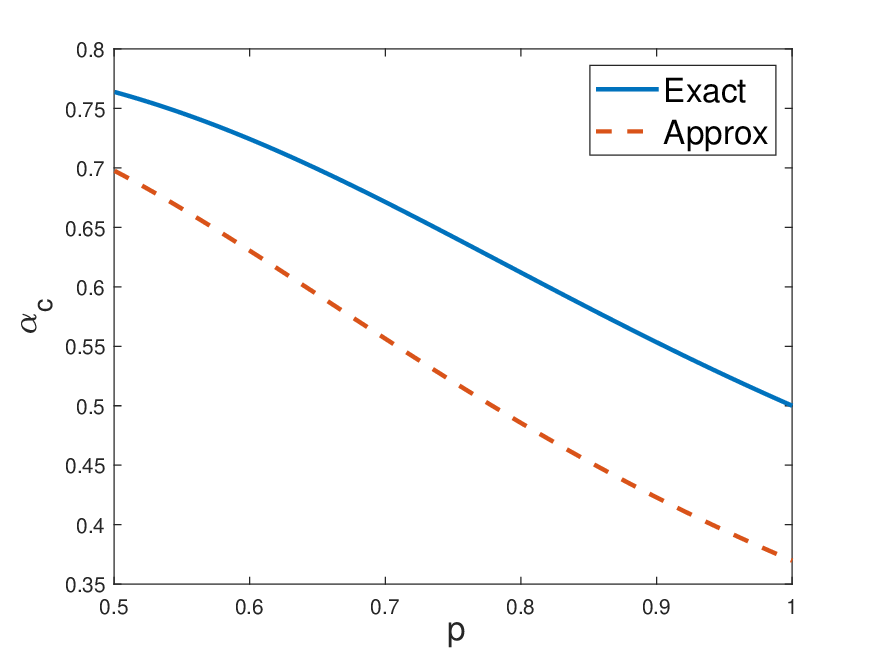}}
\end{center}
\caption{The critical value $\alpha_c$ as a function of the transmission probability $p$ as deduced in this paper (solid line) and the approximation of this quantity using the two-type percolation approximation (dashed line). Left panel: $\mathbb{P}(D=3)=1$; right panel: $\mathbb{P}(\tilde{D}=1)=\mathbb{P}(\tilde{D}=5)=1/2$.  See text for further details.}
\label{fig:alphaccomp}
\end{figure}

We depict $\alpha_c$ and $\alpha_c^{\rm BM}$ as functions of $p$ for the degree distributions given by (i) $\mathbb{P}(D=3)=1$ and (ii)
$\mathbb{P}(\tilde{D}=1)=\mathbb{P}(\tilde{D}=5)=1/2$ in Figure~\ref{fig:alphaccomp}.  Note that both of these degree distributions have $\mathbb{E}[\tilde{D}-1]=2$ and hence $R_0^{(1)}=2p$.  For the first degree distribution, the approximation $\alpha_c^{\rm BM}$ is extremely good, indeed for polarized partial immunity it is clearly superior to that used in Figure~\ref{fig:3}.  However, the approximation performs poorly for the second degree distribution.  In particular, when $p=1$, the approximation $\alpha_c^{\rm BM}$ is appreciably smaller than the correct $\alpha_c$, which has the following intuitive explanation.  When $p=1$, after the first epidemic there is no edge between an infected and a susceptible individual and in the event of a large outbreak all individuals of the giant component of the underlying network are infected.  Thus the reconfiguration of edges between the first and second epidemic underlying the approximation simply involves breaking all edges between individuals infected by the first epidemic into half-edges and repairing them uniformly at random, and doing the same for edges between individuals not infected by the first epidemic.  The degree distribution of individuals in the giant component is stochastically larger than $D$, so the approximation leads to an inflated $R_0^{(2)}$, and hence a reduced $\alpha_c$.

\section{Numerical illustrations of our model}
\label{sec:illus}

In this section, we use simulations to explore the accuracy for finite networks of our asymptotic results for the probability ($p_{\rm maj}$) and relative final size ($z_{\rm maj}$) of a large second epidemic, and investigate briefly the behaviour of $z_{\rm maj}$ on the type (polarized or leaky) of partial immunity and other model parameters, especially the choice of degree distribution.  The examples are based on  \cite[Section 3]{Bans12}.  We consider four degree distributions:
\begin{enumerate}
\item[(i)]
$D \equiv d$, i.e.~$D$ is constant with $p_d=1$  and $f_D(s)=s^d$;
\item[(ii)]
$D \sim {\rm Poisson}(\mu_D)$, i.e.~$D$ is Poisson with mean $\mu_D$  and $f_D(s)={\rm e}^{-\mu_D(1-s)}$;
\item[(iii)]
$D \sim {\rm Geom}(\theta)$, i.e.~$p_k=(1-\theta)^{k-1} \theta$ $(k=1,2,\dots)$, with mean $\mu_D=\theta^{-1}$  and $f_D(s)=\theta s/(1-(1-\theta)s$);
\item[(iv)]
$D \sim {\rm Power}(\alpha_D,\kappa)$, i.e.~$p_k=ck^{-\alpha_D}{\rm e}^{-\frac{k}{\kappa}}$ $(k=1,2,\dots)$, where $\alpha_D, \kappa \in (0,\infty)$
and the normalising constant $c={\rm Li}_{\alpha_D}({\rm e}^{-\frac{1}{\kappa}})$, with ${\rm Li}_{\alpha_D}$ being the polylogarithm function.
\end{enumerate}
\begin{remark}
The fourth distribution is a power law with exponential cut-off (see, for example,~\cite{newman2002spread}) that has been used extensively in the physics literature. Note that, with $\theta={\rm e}^{-\frac{1}{\kappa}}$ and $\beta={\rm Li}_{\alpha_D}(\theta)$, $f_D(s)=\beta^{-1}{\rm Li}_{\alpha_D}(\theta s)$, $f_D^{(1)}(s)=\frac{1}{\beta s}{\rm Li}_{\alpha_D-1}(\theta s)$ and $f_D^{(2)}(s)=\frac{1}{\beta s^2}[{\rm Li}_{\alpha_D-2}(\theta s)-
{\rm Li}_{\alpha_D-1}(\theta s)]$, which enable $R_0^{(1)}, R_0^{(2)}, p_{\rm maj}$ and $z$ to be calculated.
\cite{Bans12} use the scale-free distribution $p_k=k^{-\gamma}/\zeta(\gamma)$ $(k=1,2,\dots)$, where $\zeta$ is the Riemann zeta function and $\gamma$ $(\approx 2.0654)$ is chosen so that $\mu_D=10$.  However, that distribution has infinite variance, which violates the assumptions in Section~\ref{sec:conmod}.
\end{remark}

\subsection{Relating asymptotic theory to outcomes in finite populations}

Each simulation for a population of size $n$ consists of first simulating $D_1, D_2, \dots, D_n$ from the given degree distribution, yielding the numbers of half-edges attached to the individuals in the population, and then pairing these half-edges uniformly at random to form the underlying network, as described in Section~\ref{sec:conmod}.  Two successive epidemics are then simulated on the network so constructed.  For the first epidemic, the initial infective is chosen uniformly at random  from the population.  For the second epidemic, under polarized partial immunity, each individual infected in the first epidemic is rendered immune independently with probability $1-\alpha$, and the initial infective is then chosen uniformly at random from all individuals that remain susceptible.  Under leaky partial immunity, an individual, $i_*$ say, is chosen uniformly at random from the whole population.  If $i_*$ was not infected by the first epidemic then it is the initial infective for the second epidemic.  If $i_*$ was infected by the first epidemic then
with probability $\alpha_S$ it is the initial infective for the second epidemic, otherwise a new potential initial infective $i_*'$ is chosen independently according to the same law, and the process is repeated until an initial infective is obtained.

Let $p^I_1$ and $p^I_2$ be the individual-to-individual infection probabilities for the two epidemics, assuming no immunity in the second epidemic.
We set $p^I_1=0.15, p^I_2=0.3$ and consider networks with $\mu_D=10$, as in \cite[Fig.~5]{Bans12}.  The values of $\sigma_D^2$ under the four distributions are (i) 0, (ii) 10, (iii) 90 and (iv) 271.4950, yielding values of $R_0^{(1)}$, calculated from~\eqref{firstR0}, of $1.35, 1.5, 2.7$ and $5.4224$, respectively.  We use the \replaced{formulation of leaky partial immunity given by~\eqref{pi-leaky1} with $p'=p^I_2$}
{same formulation of leaky partial immunity as in Section~\ref{leaky:example}, and set $\pi_{00}=p^I_2$, $\pi_{01}=\alpha_S p^I_2$, $\pi_{10}=\alpha_I p^I_2$ and $\pi_{11}=\alpha_S\alpha_I p^I_2$}, so $R_0^{(2)}$ is the same under the two models of partial immunity provided $\alpha_S \alpha_I=\alpha$. In particular, the two  $R_0^{(2)}$s are equal if $(\alpha_S, \alpha_I)=(\sqrt{\alpha}, \sqrt{\alpha)}$
and that is used, with $\alpha=\frac{1}{2}$, in Table~\ref{table:sims}, which gives estimates of the probability, $p_{\rm maj}^{(n)}$, and relative final size, $z_{\rm maj}^{(n)}$, of a large second epidemic, conditional upon a large first epidemic, for a network of size $n$, obtained from simulations as follows.


For each combination of population size $n$, degree distribution and partial immunity type, $10,000$ simulations of a network and two successive epidemics on it were performed.  The estimate $\hat{p}_{\rm maj}^{(n)}$ of $p_{\rm maj}^{(n)}$ is given by the fraction of those simulations that had a large first epidemic that also had a large second epidemic, while the estimate $\hat{z}_{\rm maj}^{(n)}$ of $z_{\rm maj}$ is given by the sample mean of the fractions of the population infected in the latter second epidemics. Standard errors of the estimates are given in brackets in Table~\ref{table:sims}.  An epidemic in a population having $n=500$ was deemed large if its size exceeded $100$.  If $n=1,000$ the cut-off was $200$, whilst for $n=5,000$  and $n=10,000$ the cut-off was $300$.  These cut-offs were chosen by inspection of corresponding histograms.  For the parameter values used in the simulations there is a clear distinction between small and large epidemics.  The asymptotic values $p_{\rm maj}$ and $z_{\rm maj}$, calculated using \replaced{Theorems~\ref{thmsurvival}-\ref{thmpol} in Section~\ref{sec:Results}}{the methods described in Section~\ref{sec:problarge}} are
given in the $n=\infty$ rows of the table.  The method of choosing the initial infective for the second epidemic implies that the probability that infective is of type 3 is $q/(q+\alpha(1-q))$ for polarized partial immunity and $q/(q+\alpha_S(1-q))$ for leaky partial immunity.

\begin{table}
\centering
\begin{tabular}{|c|c|c|c|c|c|}
\hline
& & \multicolumn{2}{|c|}{\mbox{Polarized}} & \multicolumn{2}{|c|}{\mbox{Leaky}}
\\
$D$ &$n$ & $\hat{p}_{\rm maj}^{(n)}$ &  $\hat{z}_{\rm maj}^{(n)}$ & $\hat{p}_{\rm maj}^{(n)}$ &  $\hat{z}_{\rm maj}^{(n)}$ \\
\hline
 & 500 &  0.8705 \;(0.0046) &    0.6369  \;(0.0009)& 0.8382 \;(0.0050) &   0.8389 \;(0.0006) \\
 & 1000 &  0.8720 \;(0.0046) &    0.6333 \;(0.0006)& 0.8457 \;(0.0049) &   0.8393 \;(0.0004)\\
$D \equiv 10$  & 5000 & 0.8680 \;(0.0046) & 0.6313 \;(0.0003)& 0.8403 \;(0.0049) &    0.8399 \;(0.0002) \\
 & 10000 &  0.8704  \;(0.0045) &    0.6315 \;(0.0002) & 0.8359 \;(0.0050)  &  0.8400 \;(0.0001) \\
 & $\infty$ &  0.8720  & 0.6311 & 0.8508  & 0.8397  \\
\hline
 & 500 &  0.8018 \; (0.0053) &   0.5764 \; (0.0008) & 0.7969 \; (0.0053) &    0.7984 \; (0.0005) \\
 & 1000 & 0.8040 \;(0.0052)  &  0.5743 \;(0.0005) & 0.7935 \; (0.0053) &   0.7984 \;  (0.0004) \\
$\mbox{Poisson}(10)$  & 5000 & 0.8111 \; (0.0051) &   0.5741 \; (0.0002) & 0.7973 \; (0.0053) &   0.7984 \;   (0.0002)  \\
& 10000 &  0.7990 \;(0.0052)  &  0.5740 \; (0.0002) & 0.7930 \; (0.0053) &   0.7986  \;  (0.0001) \\
& $\infty$ &  0.8098  &  0.5738 &  0.8053 &  0.7986 \\
\hline
 & 500 & 0.5591 \;(0.0066) &   0.4127 \;(0.0007)& 0.6437 \;(0.0064) &    0.6481 \;(0.0005)  \\
 & 1000 & 0.5738 \;(0.0066) &   0.4114 \;(0.0005)& 0.6475 \;(0.0064) &    0.6472 \;(0.0003) \\
$\mbox{Geom}(1/10)$  & 5000 &  0.5686 \;(0.0066) &    0.4110 \;(0.0002) & 0.6553 \;(0.0064) &  0.6469 \;(0.0001)\\
  & 10000 &  0.5725 \;(0.0066) &    0.4106 \;(0.0002) & 0.6438 \;(0.0064)  &  0.6467 \;(0.0001) \\
  & $\infty$ &  0.5706 &0.4106& 0.6294 & 0.6468  \\
\hline
 & 500 &  0.4242 \;(0.0073) &    0.3294 \;(0.0008) & 0.5258 \;(0.0074) &  0.5310\;(0.0005) \\
 & 1000 & 0.4242 \;(0.0073) &    0.3292 \;(0.0006) & 0.5216 \;(0.0073) &  0.5309 \;(0.0004)\\
$\mbox{Power}(1,36.6472)$  & 5000 & 0.4225 \;(0.0073) & 0.3275 \;(0.0003) & 0.5313 \;(0.0073) & 0.5305 \;(0.0002)  \\
  & 10000 &  0.4443  \;(0.0073) &    0.3274\;(0.0002) & 0.5221 \;(0.0074) & 0.5305 \;(0.0001) \\
  & $\infty$ &  0.4273 & 0.3274 & 0.5037  & 0.5304  \\
\hline
\end{tabular}
\caption{{ Simulation results against theoretical (asymptotic) calculations for probability and relative final size of a large second epidemic, given a large first epidemic. See text for further details.}}
\label{table:sims}
\end{table}

Table~\ref{table:sims} indicates that for these parameter values the asymptotic approximations of both $p_{\rm maj}$ and $z_{\rm maj}$ are good, even for $n=500$, with the
approximation of $\hat{z}_{\rm maj}^{(n)}$ being generally better than that of $\hat{p}_{\rm maj}^{(n)}$.  Note that under this calibration, polarized partial immunity gives better protection in the sense that the size of a large second epidemic, if one occurs, is smaller.  However, for the constant and Poisson degree distributions, the probability of a large second epidemic is slightly larger.  Note also that for degree distributions in Table~\ref{table:sims}, the probability and mean size of a large second epidemic both decrease as $\sigma_D^2$ increases.  This is because the first epidemic is greater for larger $\sigma_D^2$ (cf.~the values of $R_0^{(1)}$ given above) and hence the population is better protected against the second epidemic.

\subsection{Exploring model behaviour}

Figure~\ref{fig:zsecond} shows plots of the fraction of the population infected by a large second epidemic, $z_{\rm maj}$, against the individual-to-individual infection probability for the second epidemic, $p_2^I$, for various models of partial immunity when the individual-to-individual infection probability for the first epidemic, $p_1^I=0.15$.  For polarized partial immunity, we set $\alpha=\frac{1}{2}$, as before.  For leaky partial immunity, we consider three models with
$\alpha_S\alpha_I=\frac{1}{2}$ and also the model with $\alpha_S=\alpha_I=\frac{1}{2}$.  The latter uses the calibration considered by \cite{Bans12}, who note that ``the total susceptibility and total transmissibility over the entire network is equal in the two immunity models".
However, for disease transmission these factors tend to operate in a multiplicative rather than additive fashion.
If $\alpha_S=\alpha_I=\alpha$, then every element of $M_{\rm leaky}$ (see~\eqref{equ:Mleaky}) is less than or equal to the corresponding element of $M_{\rm pol}$  (see~\eqref{equ:Mpolarized}), with strict inequality for some elements provided $\alpha \in (0,1)$.  It follows, in an obvious notation, that if $\alpha \in (0,1)$ then $R_{0,\text{leaky}}^{(2)} <R_{0,\text{pol}}^{(2)}$, as reflected in the intercepts of the curves in Figure~\ref{fig:zsecond} with the $p_2^I$-axis.  If instead $\alpha_S=\alpha_I=\sqrt{\alpha}$, then $R_{0,\text{leaky}}^{(2)} = R_{0,\text{pol}}^{(2)}$, as is also reflected by the intercepts of the curves in Figure~\ref{fig:zsecond}.

\begin{figure}[t]
\begin{center}
\resizebox{6cm}{!}{\includegraphics{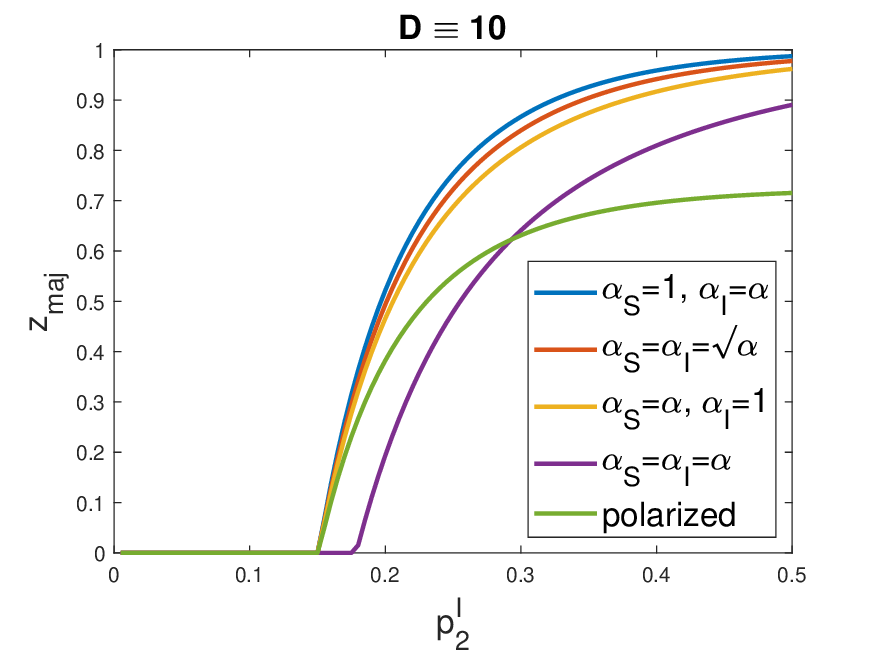}}
\resizebox{6cm}{!}{\includegraphics{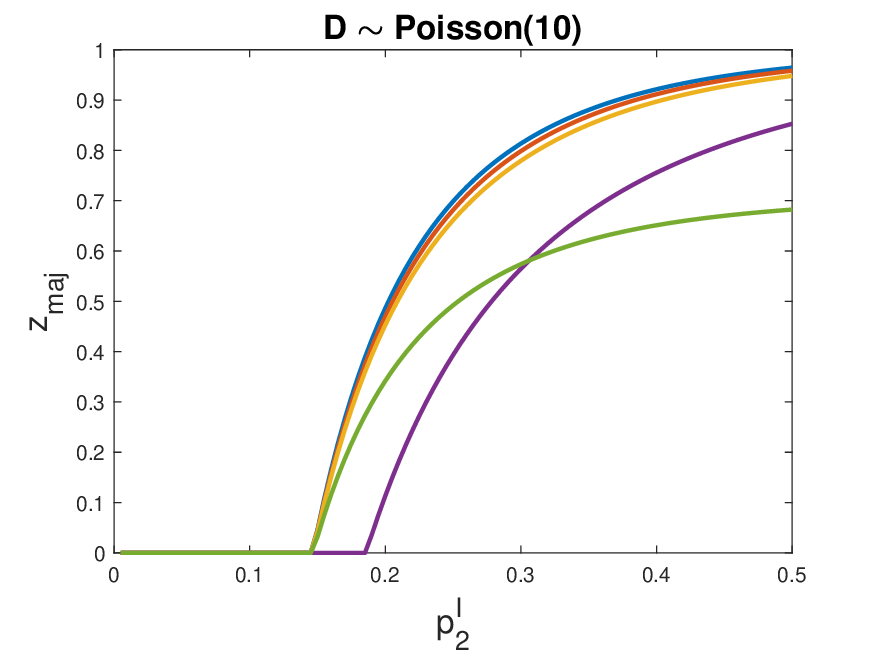}}\\
\resizebox{6cm}{!}{\includegraphics{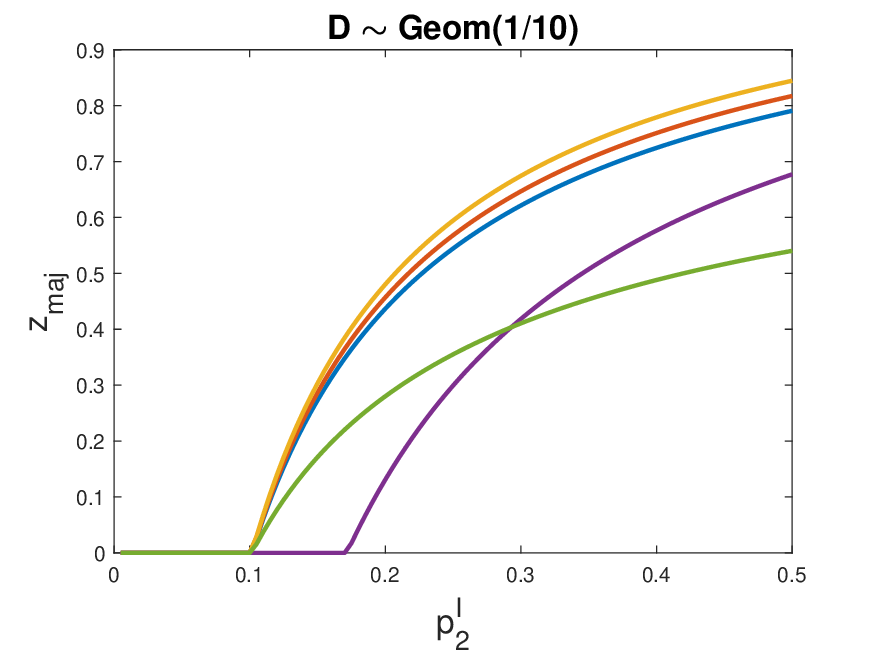}}
\resizebox{6cm}{!}{\includegraphics{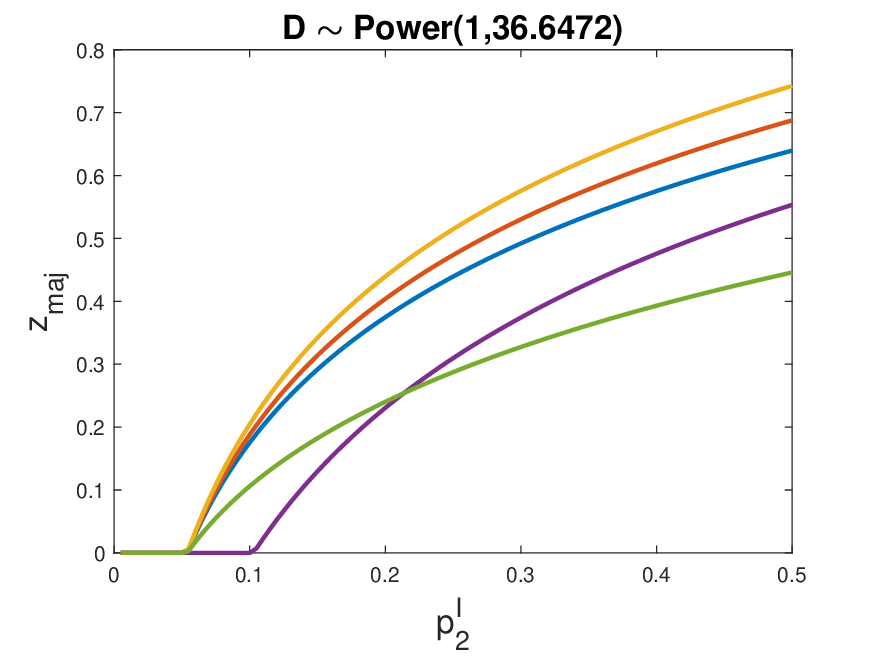}}
\end{center}
\caption{Relative final size $z_{\rm maj}$ of a large second outbreak.  See text for further details.}
\label{fig:zsecond}
\end{figure}

One way of considering the effects of the two models of partial immunity is to imagine that each individual infected by the first epidemic is given two coins, a susceptibility coin that has probability $\alpha_S$ of landing heads and an infectivity coin that has probability $\alpha_I$ of landing heads.  Under the leaky model, every time a susceptible individual is contacted they toss their susceptibility coin and become infected if and only if it lands heads.  If they become infected then they attempt to contact their neighbours independently, each with probability $p_2^I$, but for each such attempt they toss their infectivity coin and the contact is successful if and only if it lands heads.  Under the polarized model, the first time that a susceptible is contacted they toss their susceptibility coin.  If it lands heads then they become infected and then contact their neighbours independently, each with probability $p_2^I$; they do not toss their infectivity coin.  However, if it lands tails then they are immune from that contact and all subsequent contacts.

Suppose that $(\alpha_S, \alpha_I)=(\alpha,1)$.  Then it is easily seen that polarized partial immunity (with parameter $\alpha$) leads to a smaller second epidemic than leaky partial immunity, since an individual's susceptibility coin landing tails protects them also from {\it all} further contacts.  Suppose instead that $(\alpha_S, \alpha_I)=(\alpha,\alpha)$.  Then leaky partial immunity is likely to lead to a smaller second epidemic at low values of $p_2^I$, since individuals are unlikely to be contacted twice in the second epidemic and individuals infected in both the first and second epidemics are less infectious than under polarized partial immunity.  However, at high values of $p_2^I$, individuals are likely to be contacted several times in the second epidemic and the advantage gained from the lifelong immunity in the polarized model more than offsets individuals being more infectious if infected.  This explains the crossover effect of the corresponding curves in Figure~\ref{fig:zsecond}.

\section{Analysis of the second epidemic and proofs}\label{sec:proofs}

\subsection{A percolation approach to the first and second epidemic}\label{percapproach}
\replaced{To}{Before we} prove Theorems \ref{thmR0} and \ref{thmsurvival}\added{,} we use a colouring of the vertices and edges in the graph $\mathcal{G}$ and in a tree $\mathcal{T}$, in which the ``local neighbourhood'' of vertices is distributed as the local neighbourhood of vertices in $\mathcal{G}$\added{ as $n \to \infty$}.

\subsubsection{Colouring the graph $\calG$}
In what follows we exploit that in the first epidemic, for different pairs of neighbours the events that the individual first infected (if any) makes an infectious contact with the other are independent and all have probability $p$. We do this through colouring the vertices and edges in $\calG$. The coloured graph we denote by $\calGst^{(n)} =\calGst = (\calVst,\calEst)$, and we now describe its construction.

Note that each edge in $\calE$ can be used for transmission of the disease at most once during an epidemic, namely, when one of the endpoints is infectious and the other susceptible, since after such a contact both endpoints of the edge are infectious \citep{cox1988limit}. Hence, for every edge in $\calG$, we may determine in advance whether or not it serves as a possible route of transmission of infection, provided one of its end vertices becomes infected.
We create $\calEst$ by colouring the edges in $\calE$ independently, each edge being red with probability $p$ and blue otherwise.
By construction, the subset of red edges in $\calEst$ is distributed as the set of open edges after bond percolation  with edge-probability $p$ on the graph $\calG$ \citep{cox1988limit,Grimmett1999}. Red edges are interpreted as possible routes of transmission of the disease, that is, if there is a red edge present between two vertices, then the first vertex that becomes infectious (if any) infects the second vertex, if the second vertex has not\deleted{ yet} been infected by another vertex in the meantime.

Let $v_0$ denote the initially infectious vertex in the first epidemic. We obtain $\calVst$ by colouring the vertices in $\calV$ such that $v_0$ and all vertices in $\calV$ which can be reached by a red path in $\calEst$ from $v_0$ are coloured red, while all other vertices are blue. The set of red vertices is denoted by  $\calC(v_0)$. By construction, the set of individuals infected during the first epidemic is distributed as $\calC(v_0)$. So, loosely speaking, the red vertices and edges are those that are infected and along which infectious contacts \replaced{would be}{are} made \added{if one of the end-vertices of the edge got infected }during the first epidemic.

Using the obvious connection between $\calGst$ and percolation on $\calG$ we call a subset $\mathcal{S}$ of $\calVst$ a component if $\mathcal{S}$ is connected through red edges in $\calGst$ (i.e.\ every vertex in $\mathcal{S}$ can be reached from all other vertices in $\mathcal{S}$ through a path of red edges in $\calEst$) and there is no red edge connecting $\mathcal{S}$ to its complement.
If $R_0^{(1)}>1$ we say that the epidemic or outbreak is large if $\calC(v_0)$ is the largest component of $\calVst$.  We denote this largest component by $\maxcl$. With probability tending to 1 as $n \to \infty$, $\maxcl$ is the only component  which contains more than $(\log n)^2$ vertices  and there exists a constant \replaced{$q <1$}{$q>0$}, such that the fraction of vertices in the largest component converges in probability to $1-q$ as $n \to \infty$ \cite[Thm.\ 3.2.2]{Durr07}.
It follows immediately that\added{ in the first epidemic} the probability of a large outbreak converges to $1-q$ and conditioned on a large outbreak the fraction of the population infected converges in probability to $1-q$, where $q$ is defined in \eqref{qdef0}.

\subsubsection{A coloured tree approximation of $\calGst$}
\label{sec:coltree}

For the second epidemic we assume that the first outbreak is large and therefore that $\calC(v_0) = \maxcl$.  Thus, we assume implicitly that $R_0^{(1)}>1$.
We consider the coloured graph $\calGst$  and pick a vertex of $\calGst$, say $w_0$,\comment{replaced $u_0$ by $w_0$} uniformly at random. This vertex represents the initial case for the second epidemic. Then $w_0$ may or may not be part of the cluster $\maxcl$.
By construction of the graph $\calG$, the environment of $w_0$ is with high probability locally tree-like, i.e.\ for any $k \in \mathbb{N}$  the number of circuits (a circuit is a path of {\it distinct} edges for which the starting vertex of the first edge is the same as the end vertex of the last edge)  in $\calG$ that both contain $w_0$ and are of length at most $k$  converges in probability to 0 as $n \to \infty$.
We use this locally tree-like property of $\calG$ to describe the environment of $w_0$ first in $\calGst$ and then  in $\calG$. We call $w_0$ the root of the tree.

Let $\calN^{(n)}(w_0;k)$ be the set of (coloured) vertices within distance $k$ of $w_0$ in $\calGst^{(n)}$ together with the (coloured) edges with both end-vertices in this set.
Let $\calT$ be a graph created by a two stage Galton-Watson tree \cite[Chapter 3]{van2016random} in which the ancestor (root) has offspring distribution $D$ and the other particles in the Galton-Watson tree have offspring distribution $\tilde{D}-1$, i.e.\ including their parent they have degree distribution $\tilde{D}$.
The vertex set of $\calT$ corresponds to the particles in the Galton-Watson process and vertices are connected by an edge if and only if they have a parent-child relationship.
Let $\calTst$ be a coloured version of $\calT$, in which edges are independently red with probability $p$, and otherwise blue.
A vertex is coloured red if it is part of an infinite path of red edges in $\calTst$. Formally this means that
a vertex at distance $d$ ($d \in \mathbb{N}$) from the root  is red if and only if it is connected through a path of red edges to at least one vertex at every distance larger than $d$ from the root ($w_0$). Note that if $\calT$ is finite (which occurs with strictly positive probability if $p_0+p_1>0$), then all vertices in $\calTst$ are blue.
Let $\calTst(k)$ be the subgraph of $\calTst$ consisting of all vertices within distance $k$ of the root and the edges of $\calT$ with both end-vertices in this set, while the colour of vertices and edges is preserved.

We use without formal proof that  for all $k \in \mathbb{N}$, $\calN^{(n)}(w_0;k)$ converges in distribution to $\calTst(k)$ as $n \to \infty$.  See \cite{newman2002spread} for a heuristic discussion of the convergence of the non-coloured version of the graphs and \cite{Ball09} for a more formal analysis.
The convergence of the coloured graphs can be understood heuristically by observing that with high probability, i.e.~with probability tending to 1 as $n \to \infty$, the unique large component in $\calGst$ corresponds to the infinite components in $\calTst$. Note that in large but finite graphs, two uniformly chosen vertices that are part of large components are with high probability in the same component, although their distance goes to infinity as $n \to \infty$ \cite[p.~84]{Durr07}, while on a percolated infinite tree, different vertices may be in different infinite components (which can be interpreted as  the vertices being at infinite distance \replaced{from}{of} each other).

We then analyse the second epidemic as if it spreads on $\calTst$ starting with the root initially infectious. We assume that the infectivity and susceptibility of the red vertices behave in the same way as they do in $\calGst$.
We use, again without a formal proof, that the probability of a large second outbreak on $\calGst$ converges to the probability of an infinite outbreak on $\calTst$, and we define and compute a natural threshold parameter $R_0^{(2)}$ for the second epidemic.

In the remainder of the paper we say that for  every non-root vertex $u$ in $\calTst$, the neighbour of $u$ in $\calTst$ which is on the shortest path from the root to $u$  is the \emph{parent} of  $u$ and this parent is denoted by $a(u)$. All other neighbours are \emph{children} of $u$. Similarly, if $u$ is on the shortest path from the root to a vertex $v$, then $v$ is a \emph{descendant} of $u$. Finally, we call an infinite path in $\calTst$ of descendants of $u$ which starts at $u$ a \emph{path of descent of $u$}. We say that a path of descent is red if and only if all edges in the path are red.

For every vertex $v$ in $\calTst$, let $\calTst^{(-v)}$ be $\calTst$ with $v$ and all adjacent edges made blue. The spread of the first epidemic creates three types of vertices: vertices which were infected in the first epidemic and have a red path of descent (type 1), vertices which were infected in the first epidemic but have no red path of descent (type 2),  and vertices which escaped the first epidemic and therefore have no red path of descent (type 3). Note that \deleted{type 2 }vertices \added{of type 2 }are always connected by a red edge to their parent which is either of type 1 or of type 2. Hence, vertex $u$ in $\calTst$ is of
\begin{description}
 \item[type 1,] if $u$ is part of an infinite red path in $\calTst^{(-a(u))}$;
 \item[type 2,] if $u$ is red but not part of an infinite red path in $\calTst^{(-a(u))}$;
 \item[type 3,] if $u$ is blue.
\end{description}
Note also that a vertex $u$ in $\calTst$ is of type 1 if and only if a red path of descent in $\calTst$ starts \replaced{at}{in} $u$. This typing of vertices is necessary to construct a branching process approximation to the second epidemic. Below we see that the joint distributions of the numbers of children of the various types
depends on the type of the individual/vertex whose children are under consideration, and therefore the different types are needed.

As an example, consider $\calG$ with  $p_3 = \mathbb{P}(D=3) =1$. The tree $\calT$ has a root with three children and all other vertices have two children. A realisation of the (randomly) coloured tree $\calTst$ is shown in Figure \ref{fig:tree}. In this figure vertices $u_7$, $u_{17}$ and $u_{19}$ are of type 2. All other red vertices are of type 1, and all blue vertices are of type 3.

\begin{figure}[ht!]

\includegraphics[width=\textwidth]{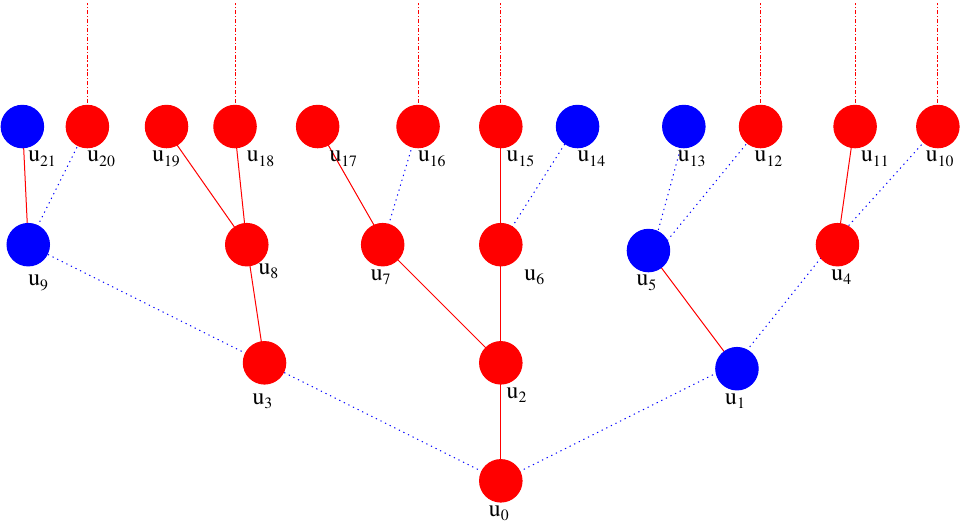}
\caption{The coloured tree $\calTst$ with root $w_0$. In this network every vertex has degree $3$ and each vertex
other than $w_0$ has one ``parent'' and two ``children''. The red vertices represent individuals infected in the first epidemic and blue vertices represent uninfected individuals. The solid red edges represent relationships through which the first epidemic would transmit if at least one of the end-vertices were infected in the first epidemic. The dotted blue edges represent relationships through which the first epidemic would not transmit, even if one of the end vertices got infected. The dashed-dotted red lines represent (infinite) red paths of descent in $\calTst$.}
\label{fig:tree}       
\end{figure}

The typing of the vertices in  $\calTst$ allows us to approximate the second epidemic on $\calTst$ by a multi-type Galton-Watson process \citep[Chapter 4]{jagers1975branching} in which
the ancestor has a different offspring distribution than
all other individuals in the process.  We denote this branching process by $\mathcal{B}_F^{(2)}$.  The superfix ${(2)}$ indicates that the approximation is for the second epidemic.  The suffix $F$ indicates that the branching process approximates the forward process of the epidemic, in contrast to the branching process $\mathcal{B}_B^{(2)}$
introduced in Section~\ref{sec:finalsize}, which approximates the backward process (or susceptibility set \citep{Ball19}).

\subsection{Branching process approximation for the second epidemic}
\label{branproc}

For $i,j \in \{1,2,3\}$, let $\tilde{X}_{ij}$ be the random number of type-$j$ vertices infected by a type-$i$ vertex in $\calTst$, which is not the root.
Define $m_{i j}= E[\tilde{X}_{ij}]$ and  $M$ as the $3 \times 3$ matrix with  $m_{ij}$ as the $i,j$ element. That is, $M$ is the mean offspring matrix for the non-initial generations of $\mathcal{B}_F^{(2)}$.\added{ As becomes clear this matrix is the same as $M$ defined in \eqref{eq:8}.} The dominant eigenvalue of the matrix $M$ is a threshold parameter for the second epidemic and is denoted by $R^{(2)}_{0}$. A major
outbreak of the second epidemic occurs asymptotically with strictly positive probability if and only if $R^{(2)}_{0}>1$.

To make further progress,
label type-1 vertices by $A$,  and type-$2$ and type-$3$ vertices by $B$. The  vertices with label $A$ are the vertices with a red path of descent in $\calTst$, while the vertices with label $B$ do not have a red path of descent.
The number of children of a typical (\added{i.e.} non-root) vertex in $\mathcal{T}$ is distributed as $\bar{D}-1$ and a vertex has label $B$, if all of its children either have label $B$ or are connected\added{ to that vertex} by a blue edge. Therefore, the probability that a vertex in the tree has label $B$ satisfies \eqref{qdef} and standard theory from branching processes gives that it is the smallest positive solution of this equation. That is,  the probability that a typical vertex in the tree has label $B$ is $\tilde{q}$, while the probability that a vertex in the tree has label $A$ is $1-\tilde{q}$.
We note that by Bayes' Theorem, for integers $0\leq \ell \leq k-1$, the probability that a typical vertex (say $v$) with label $A$ has $k-1$ children in total (i.e.\ $v$ is of degree $k$), of which $\ell$ are of type $A$, satisfies
\begin{align}
\label{OffsAA1}
\mathbb{P}&(\text{
$v$ has $k-1$ children of which $\ell$ have label $A$ $\mid$ $v$ has label $A$ })\nonumber\\
&\qquad=\mathbb{P}(\text{$v$ has $k-1$ children of which $\ell$ have label $A$})\nonumber\\
&\qquad \qquad\times
\frac{\mathbb{P}(\text{$v$ has label $A$  $\mid$
$v$ has $k-1$ children of which $\ell$ have  label $A$})}{\mathbb{P}(\text{$v$ has label $A$)}}.
\end{align}

Because $v$ has label $A$  if and only if $v$ shares a red edge with at least one child with label $A$, we have $$\mathbb{P}(\text{$v$ has label $A$  $\mid$
$v$ has $k-1$ children of which $\ell$ have  label $A$}) = 1-(1-p)^{\ell}$$ and  \eqref{OffsAA1} yields
\begin{align}
\label{OffsAA}
\mathbb{P}&(\text{
$v$ has $k-1$ children of which $l$ have label $A$ $\mid$ $v$ has label $A$ })\nonumber\\
&\qquad=
\tilde{p}_k \dbinom{k-1}{\ell}(1-\tilde{q})^{\ell} \tilde{q}^{k-\ell-1} \times \frac{ 1-(1-p)^{\ell}}{1-\tilde{q}}.
\end{align}


We can compute in a similar fashion,
\begin{align}
\label{OffsBA}
\mathbb{P}&(\text{
$v$ has $k-1$ children of which $\ell$ have label $A$ $\mid$ $v$ has label $B$ })\nonumber\\
&\qquad=
\tilde{p}_k \dbinom{k-1}{\ell}(1-\tilde{q})^{\ell} \tilde{q}^{k-\ell-1} \times \frac{ (1-p)^{\ell}}{\tilde{q}}.
\end{align}
Obviously, because all children have either label $A$ or $B$, we can deduce
$$\mathbb{P}(\text{
$v$ has $k-1$ children of which $\ell$ have label $B$ $\mid$ $v$ has label $i$ }),
 $$ for $i \in \{A,B\}$, immediately from \eqref{OffsAA} and \eqref{OffsBA}.
Below we use the above probabilities to compute the threshold parameter $R_0^{(2)}$, and the probability and size of a large outbreak of the second epidemic.

\subsection{Proof of Theorem \ref{thmR0}: Computation of $R_0^{(2)}$}\label{sec:R0}

In order to compute $R_0^{(2)}$, which is the dominant eigenvalue of
the mean offspring matrix $M$, we first need to derive, for $i,j \in \{A,B\}$, the  expected number of label-$j$ children of a given label-$i$ individual, which we denote by $\mu_{ij}$.
Expressions for these expectations are derived as follows, using \eqref{OffsAA} and \eqref{OffsBA}. Let $v$ be a vertex, which is not the root.
By definition we have

\begin{align*}
\mu_{AA}&=\mathbb{E}[\text{
number of label-$A$ children of $v$ $\mid$ $v$ has label $A$ }]\\
&=\frac{ \mathbb{E}\left[(\text{
number of label-$A$ children of $v$})\,\mathds{1}
(\text{$v$ has label $A$})\right]}{\mathbb{P}(\text{$v$ has label $A$})}\\
&=\frac{\displaystyle\sum_{\ell=1}^{\infty}\ell\,\mathbb{P}(\text{$v$ has $\ell$ label-$A$ children})\,\mathbb{P}
(\text{$v$ has label $A$ $\mid$ $v$ has $\ell$ label-A children})}
{\mathbb{P}(\text{$v$ has label $A$})}.
\end{align*}

Conditioning on the degree of $v$ then gives
\begin{align*}
\mu_{AA} &= \frac{1}{1-\tilde{q}} \displaystyle\sum_{k=1}^{\infty}\tilde{p}_{k}\displaystyle\sum_{\ell=0}^{k-1} \ell \dbinom{k-1}{\ell}(1-\tilde{q})^{\ell} \tilde{q}^{k-\ell-1}(1-(1-p)^{\ell})\\
&=\displaystyle\sum_{k=2}^{\infty}(k-1)\tilde{p}_{k}\displaystyle\sum_{\ell=1}^{k-1}\dbinom{k-2}{\ell-1}(1-\tilde{q})^{l-1} \tilde{q}^{k-\ell-1}(1-(1-p)^{\ell}).
\end{align*}
Finally, using the binomial theorem, we obtain \eqref{eq:4}:
\begin{equation*}
\mu_{AA} =\displaystyle\sum_{k=2}^{\infty}(k-1)\tilde{p}_{k}\left(1-(1-p)(1-p+p\tilde{q})^{k-2}\right).
\end{equation*}
Note that $\mu_{AA}$ may be written as
$$
\frac{d}{dx} \left.\displaystyle\sum_{k=2}^{\infty}\tilde{p}_{k}x^{k-1}\right|_{x=1}
-
(1-p) \frac{d}{dx} \left.\displaystyle\sum_{k=2}^{\infty}\tilde{p}_{k} x^{k-1}\right|_{x=1-p+p\tilde{q}}.
$$
That is,
$$\mu_{AA} = f_{\tilde{D}-1}'(1) -(1-p)f_{\tilde{D}-1}'(1-p+p \tilde{q}) = \frac{1}{\mu_D} \left( f_D''(1) -(1-p)f_D''(1-p+p \tilde{q})\right),  $$
where $f_{\tilde{D}-1}(x)=\mathbb{E}[x^{\tilde{D}-1}]$ and $f_D(x)=\mathbb{E}[x^D]$ ($x \in[0,1]$) are the probability generating functions (PGFs) for the random variables $\tilde{D}-1$ and $D$, respectively. This representation is useful, because for many probability distributions the PGFs have a convenient form.
Similarly, we obtain \eqref{eq:5},
\begin{align*}
\mu_{AB}&=\frac{1}{1-\tilde{q}} \displaystyle\sum_{k=1}^{\infty}\tilde{p}_{k}\displaystyle\sum_{\ell=0}^{k-1}\dbinom{k-1}{\ell}(1-\tilde{q})^{\ell} \tilde{q}^{k-\ell-1}(k-\ell-1)(1-(1-p)^{\ell})\\
&=\frac{\tilde{q}}{1-\tilde{q}} \displaystyle\sum_{k=2}^{\infty}(k-1)\tilde{p}_{k}\displaystyle\sum_{\ell=0}^{k-2}\dbinom{k-2}{\ell}(1-\tilde{q})^{\ell} \tilde{q}^{k-\ell-2}(1-(1-p)^{\ell})\\
&=\frac{\tilde{q}}{1-\tilde{q}}\displaystyle \sum_{k=2}^{\infty}(k-1)\tilde{p}_{k}\left(1-(1-p+p\tilde{q})^{k-2}\right),
\end{align*}
so $$\mu_{AB}= \frac{1}{\mu_D} \left( f_D''(1) -f_D''(1-p+p \tilde{q})\right).$$
With similar computations we obtain \eqref{eq:6} and \eqref{eq:7}, where we can write
\begin{align*}
\mu_{BA} & = \frac{(1-p)(1-\tilde{q})}{\tilde{q}}\displaystyle\sum_{k=2}^{\infty}(k-1)\tilde{p}_{k}(1-p+p\tilde{q})^{k-2}\\
& = \frac{(1-p)(1-\tilde{q})}{\tilde{q}} \frac{1}{\mu_D} f_D''(1-p+p \tilde{q})
\end{align*}
and
\begin{equation*}
\mu_{BB}= \displaystyle\sum_{k=2}^{\infty}(k-1)\tilde{p}_{k}(1-p+p\tilde{q})^{k-2}=  \frac{1}{\mu_D} f_D''(1-p+p \tilde{q}).
\end{equation*}
So we obtain the matrix $M$ from \eqref{eq:8}
\begin{equation*}
M=\begin{pmatrix}
\pi_{11}\mu_{AA} & p\,\pi_{11}\mu_{AB} &(1-p)\pi_{10}\mu_{AB} \\
\pi_{11}\mu_{BA} & p\,\pi_{11}\mu_{BB} & (1-p)\pi_{10}\mu_{BB} \\
\pi_{01}\mu_{BA} & 0 & \pi_{00}\mu_{BB}
\end{pmatrix}.
\end{equation*}
Note that a label-$B$ child of a red individual is type 2 if and only if it is infected by its parent in the first epidemic, and all label-$B$ children of a blue individual are necessarily type 3.
The threshold parameter $R^{(2)}_0$ is given by the dominant eigenvalue of the matrix $M$ \citep[Theorem 7.3]{diekmann2012mathematical} and Theorem \ref{thmR0} follows \added{(see also \cite[Chapter 4]{jagers1975branching})}.

\subsection{Proof of Theorem \ref{thmsurvival}: The probability of a large outbreak}
\label{sec:problarge}
In order to prove Theorem  \ref{thmsurvival} and compute the probability of a large outbreak of the second epidemic, we need to compute the
PGFs of the joint offspring distributions of $\mathcal{B}_F^{(2)}$.
We need to consider the offspring random variables for the root of the branching process (as defined at the start of Section \ref{branproc}), and the non-initial generations of the approximating branching process separately.

Note that the root can only be of type 1 or type 3.  For $i \in \{1,3\}$ let $\mathbf{Y}_i=(Y_{i1},Y_{i2},Y_{i3})$ be a random vector where $Y_{ij}$ ($j \in \{1,2,3\}$) is the number of type-$j$ children of the root of $\calTst$ if this root is of type $i$ and let $\mathbf{X}_i=(X_{i1},X_{i2},X_{i3})$ denote a random vector, where $X_{ij}$ ($j \in \{1,2,3\}$) is the number of type-$j$ children infected in the second epidemic by the root of $\calTst$ if this root is of type $i$.
To understand the difference between $Y_{ij}$ and $X_{ij}$ note  that whether or not vertex $v$ is a child of the root does not depend on the spread of the second epidemic, but only on whether or not $v$ is a neighbour of the root of $\calTst$.

For $\mathbf{s}=(s_{1},s_{2},s_{3})\in [0,1]^3$,
define the PGFs
\begin{eqnarray*}
f_1(\mathbf{s}) & = & \mathbb{E}[s_{1}^{X_{11} }s_{2}^{X_{12} } s_{3}^{X_{13}}],\\
f_3(\mathbf{s}) & = & \mathbb{E}[s_{1}^{X_{31}} s_{3}^{X_{33} }],\\
 f_{\mathbf{Y}_1}(\mathbf{s}) &= & \mathbb{E}[s_{1}^{Y_{11} }s_{2}^{Y_{12} } s_{3}^{Y_{13}}],\\  f_{\mathbf{Y}_3}(\mathbf{s}) &= & \mathbb{E}[s_{1}^{Y_{31} }s_{2}^{Y_{32} } s_{3}^{Y_{33}}].
\end{eqnarray*}
We show that these definitions are consistent with \eqref{fifirst}.
Observe $f_3(\mathbf{s}) = \mathbb{E}[s_{1}^{X_{31}} s_{2}^{X_{32}} s_{3}^{X_{33} }]$, because $X_{32}=0$ by definition.
To compute $f_{i}(\mathbf{s})$ ($i\in \{1,3\}$) we use that
\begin{equation}
\label{equ:PGFXtilde}
f_{i}(\mathbf{s})=\mathbb{E}[s_{1}^{X_{i1} }s_{2}^{X_{i2} } s_{3}^{X_{i3} }] = \mathbb{E}\left[\mathbb{E}\left[s_{1}^{X_{i1} }s_{2}^{X_{i2} } s_{3}^{X_{i3} }\mid Y_{i1},Y_{i2},Y_{i3}\right]\right].
\end{equation}
Because infections in the second epidemic are independent, we have
\begin{equation}\label{XYtrans}
\mathbb{E}[s_{1}^{X_{11} }s_{2}^{X_{12} } s_{3}^{X_{13} }|Y_{11},Y_{12},Y_{13}]]
=(1-\pi_{11}+\pi_{11}s_1)^{Y_{11}}(1-\pi_{11}+\pi_{11}s_2)^{Y_{12}}(1-\pi_{10}+\pi_{10}s_3)^{Y_{13}}.
\end{equation}
Similarly,
\begin{equation}\label{XYtrans2b}
\mathbb{E}[s_{1}^{X_{31} }s_{2}^{X_{32} } s_{3}^{X_{33} }|Y_{31},Y_{32},Y_{33}]]
=  (1-\pi_{01}+\pi_{01}s_1)^{Y_{31}}(1-\pi_{00}+\pi_{00}s_3)^{Y_{33}},
\end{equation}
where we have used that $X_{32} =0$.

Recall from \eqref{tdefs} that
\begin{eqnarray*}
t_{ij}(s_j)& = & 1-\pi_{11}+\pi_{11}s_j,\\
t_{i3}(s_3) &= & 1-\pi_{10}+\pi_{10}s_3,\\
t_{31}(s_1) &= & 1-\pi_{01}+\pi_{01}s_1,\\
t_{32}(s_2)& =& 0,\\
t_{33}(s_{3}) & = & 1-\pi_{00}+\pi_{00}s_3.
\end{eqnarray*}
It immedialtely follows  from~\eqref{equ:PGFXtilde}-\eqref{XYtrans2b} that, for $i\in\{1,3\}$ and $\mathbf{s}=(s_{1},s_{2},s_{3})\in [0,1]^3$,
$$f_{i}(\mathbf{s}) =  f_{\mathbf{Y}_i}(t_{i1}(s_1),t_{i2}(s_2),t_{i3}(s_3)).$$
To complete the derivation of  $f_{i}(\mathbf{s})$ $(i \in \{1,3\})$ we determine
$f_{\mathbf{Y}_i}(\mathbf{s})$ $(i \in \{1,3\})$.
First,
\begin{align}
\label{eq:1a}
f_{\mathbf{Y}_1}(\mathbf{s})&=\frac{1}{1-q}\displaystyle\sum_{k=1}^{\infty} p_{k}\displaystyle\sum_{\ell_{1}=0}^{k}\dbinom{k}{\ell_{1}}(1-\tilde{q})^{\ell_{1}} \tilde{q}^{k-\ell_{1}}(1-(1-p)^{\ell_1})\nonumber\\
&\qquad \qquad \qquad\times  \displaystyle\sum_{\ell_{2}=0}^{k-\ell_1}\dbinom{k-\ell_1}{\ell_{2}}p^{\ell_{2}} (1-p)^{k-\ell_{1}-\ell_2}
s_1^{\ell_1}s_2^{\ell_2}s_3^{k-\ell_1-\ell_2}\nonumber\\
&=\frac{1}{1-q}\displaystyle\sum_{k=1}^{\infty}p_{k}\big(\left(1-\tilde{q})s_1+\tilde{q}(p s_2+(1-p)s_3\right)\big)^{k}\nonumber\\
&\qquad -\frac{1}{1-q}\displaystyle\sum_{k=1}^{\infty}p_{k}\big(\left(1-p)(1-\tilde{q})s_1+\tilde{q}(ps_2+ (1-p)s_3\right)\big)^{k}\nonumber\\
&=\frac{1}{1-q}f_{D}(\left(1-\tilde{q})s_1+\tilde{q}(p s_2+(1-p)s_3\right))\nonumber\\
&\qquad -\frac{1}{1-q}f_{D}(\left(1-p)(1-\tilde{q})s_1+\tilde{q}(p s_2+(1-p)s_3\right)).
\end{align}
The first equality can be understood by noting that if a vertex ($v$ say) of degree $k$ has an (infinite) red path of descent at least one of $v$'s children has to have a red path of descent, and if there are $\ell_1$ children with a red path of descent, at least one of them has to be connected through a red edge to $v$, hence the $1-(1-p)^{\ell_1}$ term. The remaining $k-\ell_1$ children  do not have a red path of descent and among those $k-\ell_1$, $\mathrm{Bin}(k-\ell_{1},p)$ are of type 2 (cf.~the derivation of~\eqref{OffsAA}). The second equality follows using the binomial theorem.

In the same way we deduce
\begin{equation}\label{eq:1.4a}
f_{\mathbf{Y}_3}(\mathbf{s})
= \frac{1}{q}\displaystyle f_{D}\left((1-p)(1-\tilde{q})s_1+\tilde{q}s_3)\right).
\end{equation}

Next we consider the offspring in $\mathcal{B}^{(2)}_F$ of individuals other than the root.
For $i,j \in \{1,2,3\}$,  let $\mathbf{\tilde{Y}}_i=(\tilde{Y}_{i1},\tilde{Y}_{i2},\tilde{Y}_{i3})$ denote a random vector where $\tilde{Y}_{ij}$ is the number of type-$j$ children of a type-$i$ individual in  $\calTst$ that is not the root. Moreover, for $i,j \in \{1,2,3\}$, let $\mathbf{\tilde{X}}_i=(\tilde{X}_{i1},\tilde{X}_{i2},\tilde{X}_{i3})$ be a random vector,
where $\tilde{X}_{ij}$ is the random number of type-$j$ vertices infected in the second epidemic by a type-$i$ vertex in $\calTst$, which is not the root.
For $\mathbf{s}=(s_{1},s_{2},s_{3})\in [0,1]^3$ and $i \in \{1,2,3\}$, let
$f_{\mathbf{\tilde{Y}}_i}(\mathbf{s})=\mathbb{E}[s_{1}^{\tilde{Y}_{i1} }s_{2}^{\tilde{Y}_{i2} } s_{3}^{\tilde{Y}_{i3} }]$.

Using the same arguments as above we obtain that
\begin{equation}
\label{equ:PGFXtilde1a}
\tilde{f}_{i}(\mathbf{s})=f_{\mathbf{\tilde{Y}}_i}(t_{i1}(s_1),t_{i2}(s_2),t_{i3}(s_3))\quad(i \in \{1,2,3\}, \mathbf{s} \in [0,1]^3).
\end{equation}
Furthermore, by similar calculations to the derivations of~\eqref{eq:1} and~\eqref{eq:1.4}, we obtain \eqref{eq:2a}-\eqref{eq:2c}.

The extinction probability (and hence the probability of a minor outbreak) is now computed using the theory of multi-type branching processes \cite[Chapter 4]{jagers1975branching}.
Assume that $R_0^{(2)}>1$.

Let $\bxi=(\xi_{1},\xi_{2},\xi_{3})$ be the unique solution of the system of equations $\xi_1 = \tilde{f}_1(\xi_1,\xi_2,\xi_3)$, $\xi_2 = \tilde{f}_2(\xi_1,\xi_2,\xi_3)$ and $\xi_3 = \tilde{f}_3(\xi_1,\xi_2,\xi_3)$ in $[0,1)^{3}$. Then for $i \in \{1,2,3\}$, $\xi_i$ is the probability that the (second epidemic) offspring of an infected type-$i$ individual (other than the root) goes extinct.
Thus,
the probability of a minor outbreak  if the root of $\calGst$ is of type $i$ ($i \in \{1,3\}$)
is given by
$f_{i}(\bxi)$.
The unconditional probabilities of minor and large outbreaks depend on how the root is chosen. If the root is a vertex chosen uniformly from the population, then the probability of a large outbreak is $1-q f_{1}(\bxi)-(1-q)  f_{3}(\bxi)$. This finishes the proof of Theorem \ref{thmsurvival}.

We note that it might be more natural to let the probability that a vertex is the root of the second epidemic depend on the susceptibility of this vertex, in which case the computation of the probability of the root is dependent on the specific model under consideration (see Section~\ref{sec:illus}).

\subsection{Proof of Theorem \ref{thmfinalsize}: Final size of a large outbreak}
\label{sec:finalsize}

A key tool in determining the fraction of the population that is infected by a large outbreak, in the limit as the population size $n \to \infty$ is the concept of \emph{susceptibility sets} (see e.g.~\cite{Ball02,Ball19,Brit19}). To introduce this concept, consider the first epidemic and relax the assumption that the infectious period random variable $L$ is almost surely constant.

Construct a random directed graph, which we denote by $\mathcal{G}_1$, on the vertex set $\mathcal{V}$ as follows.  For each $i \in \mathcal{V}$, determine who $i$ would contact if they were to become infected by first sampling its infectious period $L_i$ independently from $L$ and then, conditional upon $L_i$, drawing a directed edge from $i$ to each of its neighbours in the graph independently with probability $1-{\rm e}^{-\beta L_i}$.

For distinct $i$ and $j$ in $\mathcal{V}$, write $i \leadsto j$ if and only if there is a chain of directed edges
from $i$ to $j$.  For $i \in \mathcal{V}$, define the susceptibility set of $i$ by $\mathcal{S}_i^{(1)}=\{j \in \mathcal{V} \setminus \{i\}\,:  j \leadsto i\}$.  Note that an initially susceptible individual, $i$ say, is infected by the first epidemic if and only if $\mathcal{S}_i^{(1)}$ has non-empty intersection with the set of initial infectious individuals, hence the terminology.

Suppose that $n$ is large and that the epidemic is started by a single initial infectious individual chosen uniformly at random
from $\mathcal{V}$.  The size $S_i^{(1)}$ of the susceptibility set $\mathcal{S}_i^{(1)}$ can be approximated by the total progeny
of a \textit{backward} Galton-Watson branching process, $\mathcal{B}_B^{(1)}$ say, defined as follows. The branching process  $\mathcal{B}_B^{(1)}$ has one ancestor, who corresponds to individual $i$ and is excluded from the above total progeny.  Note that $i$ has degree
distributed according to $D$ and each of $i$'s neighbours would infect $i$ independently with probability $p=1-\mathbb{E}[{\rm e}^{-\beta L}]$
if they were to become infected.  Thus the offspring random variable for the initial generation of $\mathcal{B}_B^{(1)}$
has the mixed-binomial distribution ${\rm Bin}(D,p)$, that is a binomial distribution with random ``size parameter'' $D$ and ``probability parameter'' $p$.  Now consider a typical neighbour, $j$ say, of $i$ that joins the susceptibility set $\mathcal{S}_i$.  The number of neighbours of $j$ is distributed according to
$\tilde{D}$ but one of those neighbours is $i$, so the offspring random variable for this and all subsequent
generations of $\mathcal{B}_B^{(1)}$  has the  mixed-binomial distribution ${\rm Bin}(\tilde{D}-1,p)$.  The generation-based approximation
of the susceptibility set $\mathcal{S}_i^{(1)}$ by the branching process $\mathcal{B}_B^{(1)}$ continues in the obvious fashion.

It is intuitively plausible that for large $n$, individual $i$ is infected by a large outbreak if and only if
the branching process $\mathcal{B}_B^{(1)}$ that approximates $S_i$ does not go extinct.  Further, let $Z_n^{(1)}$ denote the size of the first epidemic.  Then, subject to mild regularity conditions, conditional upon the occurrence of a large outbreak (more formally one that infects at least $\log n$ individuals), $n^{-1}Z_n^{(1)}$ converges in probability
to $1-q$ as $n \to \infty$, where $q$ is the extinction probability of $\mathcal{B}_B^{(1)}$; see~\cite{ball_sirl_trapman_2014} for a proof when the
underlying graph $\mathcal{G}$ is a random intersection graph.  It is easily checked that $q$ is given by~\eqref{qdef0}.

\added{The approach with susceptibility sets is close to a probability generating functions approach to calculate the size of the giant in-component in directed or semi-directed configuration model networks as described in \citep{Kenah07,Mill07,Kenah11}. The susceptibility set of a node is then identical to the in-component of the node in the corresponding epidemic percolation network. For the first epidemic, neither the calculations using susceptibility sets nor those using epidemic percolation networks become conceptually harder, when not assuming fixed infectious periods \citep{Ball02,Kenah07b,Ball09}. However, as discussed in Section \ref{sec:disc}, the analysis of the second epidemic will become harder if one does not assume a fixed infectious period for the first epidemic.}

We now assume that \added{for the first epidemic the infectious period is fixed and has length 1: }$\mathbb{P}(L=1)=1$ and consider the second epidemic, assuming that the first epidemic resulted in a large outbreak.  Let $\mathcal{G}_2$ denote the random directed graph corresponding to $\mathcal{G}_1$ above
but for the second epidemic; i.e.~in which there is a directed edge from $i$ to $j$ if and only if individual
$i$, if infected by the second epidemic, contacts individual $j$ in that epidemic.  As in Section~\ref{sec:coltree},
pick a root $w_0$ uniformly at random from all vertices in $\mathcal{V}$, independently from the root chosen in Section~\ref{sec:coltree}.  The susceptibility set, $\mathcal{S}_{w_0}^{(2)}$ say, of individual $w_0$ for the second epidemic is derived from $\mathcal{G}_2$ in the obvious fashion.  The susceptibility set $\mathcal{S}_{w_0}^{(2)}$
can be approximated using a $3$-type Galton-Watson process, $\mathcal{B}_B^{(2)}$ say, with individuals typed as in the ``forward process'' of infection analysed in
Section~\ref{sec:coltree}, but with the new choice of $w_0$. As in Section~\ref{sec:problarge}, the root $w_0$ is necessarily of type $1$ or $3$.

For $i,j \in \{1,2,3\}$, let $\hat{X}_{ij}$ be the number of type-$j$ offspring of a typical type-$i$
individual in a non-initial generation of the branching process $\mathcal{B}_B^{(2)}$ and define
$\check{X}_{ij}$ $(i \in \{1,3\}, j \in \{1,2,3\})$ similarly for the initial generation.  Further, for
$\mathbf{s} \in [0,1]^3$, let
\[
\begin{array}{rl}
\hat{f}_{i}(\mathbf{s}) & =\mathbb{E}[s_{1}^{\hat{X}_{i1} }s_{2}^{\hat{X}_{i2} } s_{3}^{\hat{X}_{i3} }]
\quad (i \in \{1,2,3\}),\\
\check{f}_{i}(\mathbf{s}) & =\mathbb{E}[s_{1}^{\check{X}_{i1} }s_{2}^{\check{X}_{i2} } s_{3}^{\check{X}_{i3} }]
\quad (i \in \{1,3\}).
\end{array}
\]
For $i \in \{1,3\}$, let $\check{\xi}_i$ be the extinction probability of $\mathcal{B}_B^{(2)}$ given that there is one ancestor, whose type is $i$.  Omitting the details, it is easily verified that if $R_0^{(2)}>1$ then
$\check{\xi}_1<1$ and $\check{\xi}_3<1$, otherwise $\check{\xi}_1=\check{\xi}_3=1$. By standard multi-type
branching process theory (cf.~the end of Section~\ref{sec:problarge}), if $R_0^{(2)}>1$ then $\check{\xi}_i=\check{f}_i(\hat{\bxi})$, $i \in \{1,3\}$, where $\hat{\bxi}=(\hat{\xi}_1, \hat{\xi}_2, \hat{\xi}_3)$
is the unique solution in $[0,1)^{3}$ of the system of equations  $\hat{\xi}_i = \hat{f}_i(\hat{\bxi})$, $i \in \{1,2,3\}$.

Let $Z_n^{(2)}$ be the size of the second epidemic.
Note that, in the limit as $n \to \infty$, given a large first epidemic the root $w_0$ has type $1$
with probability $1-q$ (where $q$ is given by~\eqref{qdef0}) otherwise it has type $3$. Then, given that the first and second epidemics are both large, we conjecture that $n^{-1}Z_n^{(2)}$ converges in probability to $z=1-(1-q)\check{\xi}_1-q\check{\xi}_3$
as $n \to \infty$.  (This conjecture is supported by simulations, see Section~\ref{sec:illus}, and also by equivalent results for other models; for example
\cite[Thm 6.2]{Ball09}).
 \deleted{Conditional upon both epidemics being large, as $n \to \infty$, the probability an individual chosen uniformly at random from the population avoids both epidemics converges to $q\check{\xi}_3$, the probability it is infected by the first epidemic but not the second converges to $(1-q)\check{\xi}_1$, the probability it is infected by the second epidemic but not the first converges to $q(1-\check{\xi}_3)$ and the probability it is infected by both epidemics converges to $(1-q)(1-\check{\xi}_1)$.}

To compute $\check{\xi}_1$ and $\check{\xi}_3$, we need expressions for the PGFs $\hat{f}_{i}(\mathbf{s})$ $(i \in \{1,2,3\})$
and $\check{f}_{i}(\mathbf{s})$ $(i \in \{1,3\})$.  We use that
\begin{equation}
\label{equ:tower}
\hat{f}_{i}(\mathbf{s})
=\mathbb{E}[\mathbb{E}[s_{1}^{\hat{X}_{i1} }s_{2}^{\hat{X}_{i2} } s_{3}^{\hat{X}_{i3} }|\tilde{Y}_{i1},\tilde{Y}_{i2},\tilde{Y}_{i3}]] \quad (i \in \{1,2,3\})
\end{equation}
and
\begin{equation}
\label{equ:tower1}
\check{f}_{i}(\mathbf{s})
=\mathbb{E}[\mathbb{E}[s_{1}^{\check{X}_{i1} }s_{2}^{\check{X}_{i2} } s_{3}^{\check{X}_{i3} }|
Y_{i1},Y_{i2},Y_{i3}]] \quad (i \in \{1,3\}),
\end{equation}
where $\tilde{Y}_{ij}$ $(i,j \in  \{1,2,3\})$ and $Y_{ij}$ $(i \in \{1,3\}, j \in \{1,2,3\})$ are as defined in Section~\ref{sec:problarge}.

\added{
The calculations of $\mathbb{E}[s_{1}^{\hat{X}_{i1} }s_{2}^{\hat{X}_{i2} } s_{3}^{\hat{X}_{i3} }|\tilde{Y}_{i1},\tilde{Y}_{i2},\tilde{Y}_{i3}]$ for $i \in \{1,2,3\}$ and\\ $\mathbb{E}[s_{1}^{\check{X}_{i1} }s_{2}^{\check{X}_{i2} } s_{3}^{\check{X}_{i3} }|Y_{i1},Y_{i2},Y_{i3}]$ for $i \in \{1,3\}$ are essentially the same as those for the corresponding conditional PGFs used in determining the probability of a large outbreak in Section~\ref{sec:problarge},
except that $\pi_{ij}$ $(i,j \in \{0,1\})$ defined at~\eqref{pidef} are replaced by  $\hat{\pi}_{ij}$
$(i,j \in \{0,1\})$, where $\hat{\pi}_{ij}=\pi_{ji}$ $(i,j \in \{0,1\})$. This is because the branching process
$\mathcal{B}_B^{(2)}$ approximates the backward process associated with the second epidemic, whereas $\mathcal{B}_F^{(2)}$ approximates the corresponding forward process.  It follows using~\eqref{equ:tower} that\comment{PT Oct11: took off tilde}
\[
\hat{f}_{i}(\mathbf{s})=f_{\tilde{\mathbf{Y}}_i}(\hat{t}_{i1}(s_1), \hat{t}_{i2}(s_2), \hat{t}_{i3}(s_3))\quad(i \in \{1,2,3\}, \mathbf{s} \in [0,1]^3),
\]
where $\hat{t}_{ij}(s_j)$ $(i,j \in \{1,2,3\})$ \replaced{are given by~\eqref{tdefs2}}{is the same as in Theorem \ref{thmfinalsize}}.
Similarly,
\[
\check{f}_1(\mathbf{s})=f_{\mathbf{Y}_i}(\hat{t}_{i1}(s_1),\hat{t}_{i2}(s_2),\hat{t}_{i3}(s_3))\quad(i \in \{1,3\}, \mathbf{s} \in [0,1]^3).
\]
\replaced{The expressions for $\hat{f}_{i}(\mathbf{s})$ $(i \in \{1,2,3\})$ and $\check{f}_1(\mathbf{s})$ $(i \in \{1,3\})$ coincide with those given at~\eqref{equa:PGFXtilde} and~\eqref{equa:PGFXcheck}, so Theorem~\ref{thmfinalsize} follows.}{This concludes the proof.}
}\comment{PT Oct 11, took away some text}

\subsection{Proof of Theorem \ref{thmpol}: Polarized partial immunity}
\label{sec:polproof}
We analyse the model with polarized partial immunity by considering the spread of the second epidemic on the coloured tree $\calTst$ (see Section~\ref{sec:coltree}) in a similar fashion as is done above for the model of Section~\ref{subsec:secondepi}.  For the analysis of the probability of a large outbreak of an epidemic on $\calTst$ (or on any tree) it is important that an individual can only be infected by its parent. So, whether or not a vertex is infected by its parent in $\calTst$ is independent of other infection events in the tree. Therefore,
we may reduce the  probabilities of transmission to all nodes infected in the first epidemic by the factor $\alpha$ and obtain $\pi_{00}= p', \pi_{01}=\alpha p', \pi_{10}=p'$ and $\pi_{11}=\alpha p'$, as at~\eqref{pi-pol}.  The proofs of the results corresponding to Theorems~\ref{thmR0} and~\ref{thmsurvival} then follow as above.

Turning to the final size of a large outbreak in the second epidemic, we derive now the conditional expectations in equations~\eqref{equ:tower} and~\eqref{equ:tower1} for the offspring PGFs of the backward branching process $\mathcal{B}_B^{(2)}$.
Consider first a vertex, $u$ say, in the coloured tree $\calTst$, which is not the root. For $i \in \{1,2,3\}$,  if $u$ is of type $i$ then the numbers of children of $u$ of the three types in $\calTst$  is distributed
as $(\tilde{Y}_{i1},\tilde{Y}_{i2},\tilde{Y}_{i3})$.
Suppose that $u$ is of type $1$ or $2$, i.e.\ that $u$ was infected by the first epidemic. With probability $1-\alpha$, $u$ is immune to the second epidemic and $\hat{X}_{i1}= \hat{X}_{i2}= \hat{X}_{i3}=0$.
Otherwise each of $u$'s children in $\calTst$  contacts $u$ in the second epidemic independently with probability $p'$.  Hence,
for $i \in \{1,2\}$,
\[
\mathbb{E}[s_{1}^{\hat{X}_{i1} }s_{2}^{\hat{X}_{i2} } s_{3}^{\hat{X}_{i3} }|\tilde{Y}_{i1},\tilde{Y}_{i2},\tilde{Y}_{i3}]=1-\alpha+\alpha(1-p'+p's_1)^{\tilde{Y}_{i1}}(1-p'+p's_2)^{\tilde{Y}_{i2}}(1-p'+p's_3)^{\tilde{Y}_{i3}}.
\]
Thus, using~\eqref{equ:tower},
\[
\hat{f}_{i}(\mathbf{s})=1-\alpha+\alpha f_{\tilde{\mathbf{Y}}_i}(1-p'+p's_1,1-p'+p's_2,1-p'+p's_3)\quad(i \in \{1,2\}).
\]
Note the contrast with the forward branching process $\mathcal{B}_F^{(2)}$, in which each of $u$'s children is infected independently with probability that depends on their type (e.g.\ \eqref{XYtrans} and \eqref{XYtrans2b}). This underlines why the model with polarized partial immunity needs to be treated separately in the backward branching process $\mathcal{B}_B^{(2)}$ but not in the forward branching process $\mathcal{B}_F^{(2)}$.
If $u$ is of type $3$, then it is necessarily susceptible to the second epidemic as it was not infected by the
first epidemic. Again each of $u$'s children in $\calTst$  contacts $u$ in the second epidemic independently with probability $p$, whence
\[
\hat{f}_{3}(\mathbf{s})=f_{\tilde{\mathbf{Y}}_3}(1-p'+p's_1,1-p'+p's_2,1-p'+p's_3).
\]

Similar arguments for the initial generation show that, using~\eqref{equ:tower1},
\[
\check{f}_1(\mathbf{s})=1-\alpha+\alpha f_{\mathbf{Y}_1}(1-p+ps_1,1-p+ps_2,1-p+ps_3)
\]
and
\[
\check{f}_3(\mathbf{s})=f_{\mathbf{Y}_3}(1-p+ps_1,1-p+ps_2,1-p+ps_3).
\]

These all agree with the offspring PGFs given in the statement of Theorem~\ref{thmpol}, thus completing its proof.

\deleted{In contrast to the forward branching process, calculation of the conditional expectations $\mathbb{E}[s_{1}^{\hat{X}_{i1} }s_{2}^{\hat{X}_{i2} } s_{3}^{\hat{X}_{i3} }|\tilde{Y}_{i1},\tilde{Y}_{i2},\tilde{Y}_{i3}]$ and $\mathbb{E}[s_{1}^{\check{X}_{i1} }s_{2}^{\check{X}_{i2} } s_{3}^{\check{X}_{i3} }|
Y_{i1},Y_{i2},Y_{i3}]$ depends on the model used for partial immunity, and is given for polarized partial immunity and leaky partial immunity in Sections~\ref{sec:polfinal} and~\ref{leaky:final}, respectively.}

\section{Discussion}
\label{sec:disc}

In this paper we derive results for two sequential SIR epidemics on the same (static) configuration model network. Essential in our analysis is that the first epidemic is completely ended before the second epidemic starts. Furthermore, we assume that the infectious period of
 individuals is not random.

We think that it is not too hard to obtain similar results to those in this paper if we assume that the infectious periods of infected individuals are independent and identically distributed. However, we also recognise that it is not \replaced{trivial}{entirely straightforward}. In particular, we will need a four-type branching process to approximate the epidemic instead of a three-type branching process in order to deal with dependencies which arise because whether or not two neighbours of a vertex ($v$ say) are infected by the first epidemic both depend on the infectious period of $v$ \citep{Kuul82,Ball09,Kenah11,Meester}. To deal with random infectious periods, we need to consider $\vec{\calG}$, the directed version of $\calG$ in which all edges of $\calG$ are replaced by two directed edges in opposite directions.
\replaced
{The directed edges from an individual, $u$ say, in $\vec{\calG}$ are coloured red independently given $u$'s infectious period $I_u$, but with a probability that depends on $I_u$.}{The directed edges of $\vec{\calG}$ are now red with a probability (only) depending on the infectious period of the starting vertex of the edge.} Again we can approximate $\vec{\calG}$ by a random tree ($\vec{\calT}$ say), rooted at $w_0$, which is the initial infected individual for the second epidemic. We then colour a vertex in $\vec{\calT}$ red if there is an infinite red path of edges directed towards it.
Now assume that a given vertex, $v$ say, is red because there is an infinite red path towards it, which does not involve the parent of $v$ (so there is a red path of descent towards $v$ in $\vec{\calT}$). \added{The main complication, not immediately solved by using just a directed network, is the following.}
Assume, in the terminology of Section \ref{branproc}, that $v$ has a child which has no infinite red path from above towards it. Then whether or not this child is red is dependent on whether or not the parent of $v$ is red, through the infectious period of $v$.
We can deal with this dependency by assigning subtypes to type-1 vertices, in which vertex $v$ is of type 1a if it has an infinite red path from above pointing towards it and there is a red edge from vertex $v$ towards its parent. Vertex $v$ is of type 1b if it has an infinite red path from above pointing towards it and there is no red edge from vertex $v$ towards its parent. Just as in the fixed infectious period case, vertex $v$ is of type 2 if it has at least one infinite red path towards it, but all those infinite red paths contain the parent of $v$ as well, while a type-3 vertex is one which does not have a red path leading towards it.
Once we have defined the four types of vertices in $\vec{\calT}$, we expect the analysis of the second epidemic to be very similar in spirit to the analysis presented in this paper. However, the notation and presentation will become more cumbersome and will obscure the main message of this paper.

In the polarized partial immunity example we consider two repeated epidemics on the same graph $\calG$, in which individuals infected in the first epidemic are susceptible to the second epidemic with probability $\alpha$ independently of one another. We might ask what happens if we we have more than two epidemics and individuals which were either immune to the $k$-th epidemic ($k \in \mathbb{N}$), or were infected in the $k$-th epidemic are susceptible to the $k+1$-st epidemic with probability $\alpha$. After every epidemic the population is in a state in the state space $\{S,R\}^{\calG}$ and those states after the epidemics form a Markov chain on $\{S,R\}^{\calG}$ with a stationary distribution. It would be interesting, but probably hard, to analyse properties of this stationary distribution.

  \backmatter

\bmhead{Acknowledgments}

This work was also supported by a grant from the Knut and Alice Wallenberg Foundation, which enabled the first author to be a guest professor at the Department of Mathematics, Stockholm University.

\newcommand{\noopsort}[1]{} \newcommand{\singleletter}[1]{#1}

\appendix

\section{Appendix}
In this appendix we further describe the approximations made in \cite{Bans12} and relate them to our model and notation.

\subsection{Computations of reproduction numbers in the Bansal-Meyers approximation for polarized partial immunity}\label{App1}

Although \cite{Bans12} do not discuss the basic reproduction number itself, their equation (1) may be used to compute a basic reproduction number for the second epidemic in the polarized partial immunity case using their approximation.  We denote this quantity by $R_0^{\text{BM}}(\alpha)$.

Recall the definition of $\tilde{q}$ in \eqref{qdef} and  that $\alpha$ is the probability that an individual infected in the first epidemic is fully susceptible to the second epidemic, while $p$ is the probability that an infected vertex makes an infectious contact with a given neighbour in the first epidemic, as well as the corresponding probability for the second epidemic if the neighbour is susceptible. Finally recall that  $\mu_D = \sum_{k=0}^{\infty} k p_k$ is the expectation of the degree distribution of the network. In \cite{Bans12}, the notation for $\tilde{q}$ is $u_1$, while $p$ is denoted by $T_1$.
Define $\nu = \tilde{q} + (1-\tilde{q}) \alpha$   and  $\sigma =\tilde{q}(1-p)(1-\alpha) + \alpha$.

\cite{Bans12} then consider $p_2(\ell\mid k,\text{uninfected})$,  the probability that a vertex of degree $k$ which escaped the first epidemic has $\ell$ edges to vertices that are susceptible at the start of the second epidemic, and $p_2(\ell\mid k,\text{re-susceptible})$, the probability that a vertex of degree $k$ which did not escape the first epidemic, but became susceptible again, has $\ell$ edges to vertices that are susceptible at the start of the second epidemic. For notational convenience, below we denote ``uninfected'' by ``$\text{U}$'' and ``re-susceptible'' by ``$\text{Re}$''.
They show in the Supplementary Information for their paper that  for $k \geq 1$ and $\ell =0,1, \cdots, k$,
 $$p_2(\ell\mid k,\text{U}) = {k \choose \ell}  \nu^{\ell} (1-\nu)^{k-\ell}$$
 and
 $$p_2(\ell\mid k,\text{Re})= \alpha  {k-1 \choose \ell-1}  \sigma^{\ell-1} (1-\sigma)^{k-\ell} + (1-\alpha) {k-1 \choose \ell}  \sigma^{\ell} (1-\sigma)^{k-1-\ell},$$
 where ${k-1 \choose -1}=0$.

This gives that for $k \geq 1$
\begin{eqnarray}
\sum_{\ell=0}^k \ell (\ell-1) p_2(\ell\mid k,\text{U})  &\!=\!& k(k-1) \nu^2
,\label{app1}\\
\sum_{\ell=0}^k l\ell p_2(\ell\mid k,\text{U}) &\!=\!& k \nu,\\
\sum_{\ell=0}^k \ell (\ell-1) p_2(\ell\mid k,\text{Re})  &\!=\!& k(k-1) \sigma^2 - 2 (k-1) \sigma (\sigma-\alpha),\\
\sum_{\ell=0}^k \ell p_2(\ell\mid k,\text{Re}) &\!=\!& k \sigma - (\sigma-\alpha).\label{app4}
\end{eqnarray}

Then Bansal and Meyers approximate the second epidemic by assuming that it spreads on a fully susceptible network with degree distribution defined through $p_2(\ell)$, for $\ell =0,1,\cdots$, where $p_2(\ell)$ is given in   \citep[equation (1)]{Bans12} as
$$
p_2(\ell)= \frac{\displaystyle\sum_{k=\ell}^{\infty} p_k \left[(1-p+p \tilde q)^k  p_2(\ell\mid k,\text{U}) + \alpha (1- (1-p+p \tilde q)^k) p_2(\ell \mid k,\text{Re}) \right]}{\displaystyle\sum_{k=0}^{\infty}p_k  \left[(1-p+p \tilde q)^k  + \alpha (1- (1-p+p \tilde q)^k) \right]}.
$$

The basic reproduction number according to this method then becomes after some algebra
\begin{multline*}
R_0^{\text{BM}}(\alpha)  =   p \frac{\displaystyle\sum_{\ell=0}^{\infty} \ell(\ell-1)p_2(\ell)}{\displaystyle\sum_{\ell=0}^{\infty} \ell p_2(\ell)}\\
 =  p \frac{\displaystyle\sum_{k=0}^{\infty} p_k\displaystyle\sum_{\ell=0}^k \ell(\ell-1)\left[(1-p+p \tilde q)^k  p_2(\ell|k,\text{U}) + \alpha (1- (1-p+p \tilde q)^k) p_2(\ell|k,\text{Re}) \right]}{\displaystyle\sum_{k=0}^{\infty}p_k \displaystyle\sum_{\ell=0}^k \ell \left[(1-p+p \tilde q)^k  p_2(\ell|k,\text{U}) + \alpha (1- (1-p+p \tilde q)^k) p_2(\ell|k,\text{Re}) \right]}.
\end{multline*}

Filling in \eqref{app1}-\eqref{app4} into this display, we obtain that the numerator of the expression is given by
\begin{align*}
\ & \nu^2\displaystyle\sum_{k=0}^{\infty} p_k   k(k-1)(1-p+p \tilde q)^k  \\
\ & + \alpha \sigma^2 \displaystyle\sum_{k=0}^{\infty} p_k  k(k-1)(1- (1-p+p \tilde q)^k)  \\
\ & - 2 \alpha \sigma(\sigma-\alpha) \displaystyle\sum_{k=0}^{\infty} p_k  (k-1) (1- (1-p+p \tilde q)^k)\\
=\ & \mu_D(\nu^2-\alpha \sigma^2)\displaystyle\sum_{k=0}^{\infty} \tilde{p}_k   (k-1)(1-p+p \tilde q)^k \\
\ & + 2 \mu_D \alpha \sigma(\sigma-\alpha) \displaystyle\sum_{k=0}^{\infty} \tilde{p}_k  (1-p+p \tilde q)^k -
2 \alpha \sigma(\sigma-\alpha) \displaystyle\sum_{k=0}^{\infty} p_k  (1-p+p \tilde  q)^k \\
\ & + \mu_D \alpha \sigma^2 \displaystyle\sum_{k=0}^{\infty} \tilde{p}_k (k-1)
- 2 \alpha \sigma(\sigma-\alpha)(\mu_D-1)\\
=\ & \mu_D(\nu^2-\alpha \sigma^2)\displaystyle\sum_{k=0}^{\infty} \tilde{p}_k   (k-1)(1-p+p \tilde q)^k \\
\ & + 2 \mu_D \alpha \sigma(\sigma-\alpha)  \tilde{q}  (1-p+p \tilde q) -
2 \alpha \sigma(\sigma-\alpha) \displaystyle\sum_{k=0}^{\infty} p_k  (1-p+p \tilde q)^k \\
\ & + \mu_D \alpha \sigma^2 \displaystyle\sum_{k=0}^{\infty} \tilde{p}_k (k-1)
- 2 \alpha \sigma(\sigma-\alpha)(\mu_D-1)\\
=\ &  \mu_D(\nu^2-\alpha \sigma^2)\displaystyle\sum_{k=0}^{\infty} \tilde{p}_k   (k-1)(1-p+p \tilde q)^k
- 2 \alpha \sigma(\sigma-\alpha) \displaystyle\sum_{k=0}^{\infty} p_k  (1-p+p \tilde q)^k\\
\ & + \mu_D \alpha \sigma^2 \displaystyle\sum_{k=0}^{\infty} \tilde{p}_k (k-1)
- 2 \mu_D \alpha \sigma(\sigma-\alpha)  (1-\tilde{q})  (1 + p \tilde q)
+ 2 \alpha \sigma(\sigma-\alpha)\\
=\ &  \mu_D(\nu^2-\alpha \sigma^2)\displaystyle\sum_{k=0}^{\infty} \tilde{p}_k   (k-1)(1-p+p \tilde q)^k
+ 2 \alpha \sigma(\sigma-\alpha)(1-q)\\
\ & + \mu_D \alpha \sigma^2 \displaystyle\sum_{k=0}^{\infty} \tilde{p}_k (k-1)
- 2 \mu_D \alpha \sigma(\sigma-\alpha)  (1-\tilde{q})  (1 + p \tilde q).
\end{align*}
The denominator is given by

\begin{align*}
\ & \nu\displaystyle\sum_{k=0}^{\infty} p_k   k (1-p+p \tilde q)^k
+ \alpha \sigma \displaystyle\sum_{k=0}^{\infty} p_k  k(1- (1-p+p \tilde q)^k)
-  \alpha (\sigma-\alpha) \displaystyle\sum_{k=0}^{\infty} p_k  (1- (1-p+p \tilde q)^k)\\
=\ &  \mu_D (\nu-\alpha \sigma) \displaystyle\sum_{k=0}^{\infty} \tilde{p}_k   (1-p+p \tilde q)^k
+  \alpha (\sigma-\alpha) \displaystyle\sum_{k=0}^{\infty} p_k  (1-p+p \tilde q)^k
+ \mu_D \alpha \sigma  -  \alpha (\sigma-\alpha)\\
=\ & \mu_D (\nu-\alpha \sigma) \tilde{q}   (1-p+p \tilde q)
+  \alpha (\sigma-\alpha) \displaystyle\sum_{k=0}^{\infty} p_k  (1-p+p \tilde q)^k
+ \mu_D \alpha \sigma  -  \alpha (\sigma-\alpha)\\
=\ & \mu_D (\nu-\alpha \sigma) \tilde{q}   (1-p+p \tilde q)
+ \mu_D \alpha \sigma  -  \alpha (\sigma-\alpha)(1-q).
\end{align*}
This leads to
\begin{multline}\label{BMR0}
R_0^{\text{BM}}(\alpha)\\ = \frac{p \mu_D (\nu^2-\alpha \sigma^2)\tau
+ \mu_D \alpha \sigma^2 R_0^{(1)}
- 2  p\alpha \sigma (\sigma-\alpha)[\mu_D  (1-\tilde{q})  (1 + p \tilde q)
-(1-q)]}
{\mu_D (\nu-\alpha \sigma) \tilde{q}   (1-p+p \tilde q)
+ \mu_D \alpha \sigma  -  \alpha (\sigma-\alpha)(1-q)},
\end{multline}
where $\tau= \displaystyle\sum_{k=0}^{\infty} \tilde{p}_k   (k-1)(1-p+p \tilde q)^k $, while $R_0^{(1)}= p \mathbb{E}[\tilde{D}-1]$ is the basic reproduction number for the first epidemic, see \eqref{firstR0}.
We do not know of an easy simplification of these expressions.

 \subsection{Bansal-Meyers approximation for leaky partial immunity}\label{sec:appLeaky}\label{App2}
In this subsection we consider the two-type percolation approximation used by \cite{Bans12} to analyse the second epidemic under leaky partial immunity.  We consider first the joint degree distributions associated with this two-type percolation process, which are key to the approximation in \cite{Bans12}.
As stated before, these distributions are incorrect in \cite{Bans12}, so to compare the approximation in \cite{Bans12} with our asymptotically exact approximation, we derive correct joint PGFs associated with these joint degree distributions.  Although the approximating percolation process is two-type, the associated forward and backward branching processes are three-type.  We give the offspring mean matrix for the forward branching process, the dominant eigenvalue of which gives an approximation $R_{0, {\rm BM}}^{(2)}$ to the basic reproduction number $R_0^{(2)}$ for the second epidemic.  The two-type percolation process can also be used to approximate the second epidemic under polarized partial immunity.  We show that $R_{0, {\rm BM}}^{(2)}$ is the same for the models of leaky partial immunity in Section \ref{sec:leakydef} and partial polarized immunity in Section \ref{sec:poldef}, provided $\alpha=\alpha_S \alpha_I$.

In \cite{Bans12}, individuals are typed $A$ or $B$, according to whether they were uninfected or infected, respectively, by the first epidemic. Key quantities in their approximation are $p_{ij}$, the probability a type-$A$ individual chosen uniformly at random from the population has $i$ type-$A$ neighbours and $j$ type-$B$ neighbours, and $q_{ij}$, the corresponding probability for a type-$B$ individual.  The following formulae for $p_{ij}$ and $q_{ij}$, translated into our notation, are given in \cite{Bans12}:
\begin{align}
p_{ij}&=\frac{p_{i+j} (1-p+p\tilde{q})^{i+j} \binom{i+j}{i} \tilde{q}^i(1-\tilde{q})^j}{q}\quad (i,j=0,1,\dots),\label{BMLpA}\\
q_{ij}&=\frac{p_{i+j}[1-(1-p+p\tilde{q})^{i+j}] \binom{i+j-1}{i} [\tilde{q}(1-p)]^i[1-\tilde{q}(1-p)]^{j-1}}{1-q} \quad (i=0,1,\dots; j=1,2,\dots) \nonumber.
\end{align}
However, these are incorrect.  For example, if $p=1$ then all neighbours of a type-$A$ individual are necessarily of type $A$, so $p_{ij}=0$ for $j>0$, unlike in~\eqref{BMLpA}.  The error with these formulae is that they use the unconditional probability that an individual of degree $i+j$ is uninfected by the first epidemic, rather than the conditional probability given the types of its neighbours.  \comment{FGB: Brief explanation added.}

To derive the correct formula for $p_{ij}$, note that the probability that an individual, $i_*$ say, chosen uniformly at random from the population has $i$ type-$A$ and $j$ type-$B$ neighbours, and is of type $A$ is $p_{i+j} \binom{i+j}{i} \tilde{q}^i(1-\tilde{q})^j (1-p)^j$, since each of the $j$ type-$B$ neighbours fail to infect $i_*$.  The unconditional probability that $i_*$ is of type $A$ is $q$.  Thus,
\[
p_{ij}=\frac{p_{i+j} \binom{i+j}{i} \tilde{q}^i[(1-\tilde{q})(1-p)]^j}{q}\quad (i,j=0,1,\dots).
\]
In the notation of \cite{Bans12}, let $f_A(x,y)$ be the bivariate PGF of this distribution.  Then
\begin{align}
\label{fA}
f_A(x,y)&=\frac{1}{q} \sum_{i=0}^{\infty} \sum_{j=0}^{\infty} p_{i+j} \binom{i+j}{i} \tilde{q}^i[(1-\tilde{q})(1-p)]^j x^i y^j \nonumber \\
&=\frac{1}{q}\sum_{k=0}^{\infty} p_k \sum_{i=0}^{k} \binom{k}{i} (\tilde{q}x)^i[(1-\tilde{q})(1-p)y]^{k-i} \nonumber \\
&=\frac{1}{q}\sum_{k=0}^{\infty} p_k [\tilde{q}x+(1-\tilde{q})(1-p)y]^k \nonumber \\
&=\frac{1}{q}f_D(\tilde{q}x+(1-\tilde{q})(1-p)y),
\end{align}
where $f_D$ is the PGF of $D$ and we have used the binomial theorem in the penultimate line.

Rather than finding $q_{ij}$, it is simpler to split the type-$B$ neighbours of a type-B individual, $j_*$ say, chosen uniformly at random from all type-$B$ individuals, into types $B_1$ and $B_2$, where a neighbour, $k_*$ say, of $j_*$ is of type $B_1$ if one of its neighbours other than $j_*$ makes infectious contact with it, otherwise $k_*$ is of type $B_2$.  Let $q_{i,j_1,j_2}$ be the probability that $j_*$ has $i$ type-$A$, $j_1$ type-$B_1$ and $j_2$ type-$B_2$ neighbours.  Note that, unconditionally, neighbours of $j_*$ are independently of types $A, B_1$ and $B_2$ with probabilities $\tilde{q}(1-p), 1-\tilde{q}$ and $\tilde{q}p$.  However, given that $j_*$ has $j_1$ type-$B_1$ neighbours, at least one of them must make infectious contact with $j_*$ (otherwise $j_*$ would not be infected), an event that occurs with conditional probability $1-(1-p)^{j_1}$.  Thus,
\[
q_{i,j_1,j_2}=\frac{1}{1-q}p_{i+j_1+j_2}\frac{(i+j_1+j_2)!}{i!j_1!j_2!}[\tilde{q}(1-p)]^i (1-\tilde{q})^{j_1} (\tilde{q}p)^{j_2} [1-(1-p)^{j_1}].
\]
Note that this formula is valid for all $i, j_1, j_2=0,1,\dots$, as $1-(1-p)^{j_1}=0$ when $j_1=0$.  Similar manipulations to those used in the derivation of~\eqref{fA} yield
\begin{align*}
\hat{f}_B(x,y_1, y_2)&= \sum_{i=1}^{\infty}\sum_{j_1=1}^{\infty}\sum_{j_2=1}^{\infty} q_{i,j_1,j_2} x^i y_1^{j_1} y_2^{j_2}\\
&=\frac{1}{1-q}\left[f_D(\tilde{q}(1-p)x+(1-\tilde{q})y_1+\tilde{q}py_2)\right.\\
&\qquad\qquad\qquad\left.-f_D(\tilde{q}(1-p)x+(1-\tilde{q})(1-p)y_1+\tilde{q}py_2)\right].
\end{align*}
Let $f_B(x,y)$ be the bivariate PGF of the distribution given by $q_{ij}$.  Then $f_B(x,y)=\hat{f}_B(x,y,y)$, whence
\begin{align}
\label{fB}
f_B(x,y)&=\frac{1}{1-q}\left[f_D(\tilde{q}(1-p)x+[1-\tilde{q}+\tilde{q}p]y)\right.\nonumber\\
&\qquad\qquad\qquad\left.-f_D(\tilde{q}(1-p)x+[(1-\tilde{q})(1-p)+\tilde{q}p]y)\right].
\end{align}

For a function $f(x,y)$, let $\partial_{1,0}f(x,y)$ and $\partial_{0,1}f(x,y)$ denote its partial derivative with respect to $x$ and $y$, respectively.  Four further bivariate PGFs are defined in \cite{Bans12}, viz.
\[
f_{AA}(x,y)=\frac{\partial_{1,0}f_A(x,y)}{\partial_{1,0}f_A(1,1)}, \quad f_{BA}(x,y)=\frac{\partial_{0,1}f_A(x,y)}{\partial_{0,1}f_A(1,1)},
\]
\[
f_{AB}(x,y)=\frac{\partial_{1,0}f_B(x,y)}{\partial_{1,0}f_B(1,1)},\quad
f_{BB}(x,y)=\frac{\partial_{0,1}f_B(x,y)}{\partial_{0,1}f_B(1,1)}.
\]
For example, $f_{BA}(x,y)$ is the bivariate PGF of the $A$-excess degree and $B$-excess degree of a type-$A$ individual, $i_*$ say, that is reached by following a randomly chosen edge from a type-$B$ individual (i.e.~of the numbers of type-$A$ and type-$B$ neighbours of $i_*$, excluding the type-$B$ neighbour from which it was reached).  These four bivariate PGFs, which follow easily from ~\eqref{fA} and~\eqref{fB}, underpin the forward and backward branching process approximations for the second epidemic.  We have that $f_{AA}=f_{BA}$ but $f_{AB} \ne f_{BB}$, so these approximating branching processes have three types.  For the forward process the types are: type 1, a $B$ individual who is infected in the second epidemic by an $A$ individual; type 2, a $B$ individual who is infected in the second epidemic by a $B$ individual; and type 3, an $A$ individual.  The corresponding mean offspring matrix is
\begin{equation}
\label{equ:BMapproxM}
M_{\rm BM}=\begin{pmatrix}
0 & \pi_{11}\hat{\mu}_{BB}^A &\pi_{10}\hat{\mu}_{BA}^A \\
0 & \pi_{11}\hat{\mu}_{BB}^B &\pi_{10}\hat{\mu}_{BA}^B \\
\pi_{01}\hat{\mu}_{AB} & 0 & \pi_{00}\hat{\mu}_{AA}
\end{pmatrix} ,
\end{equation}
where
\begin{align*}
\hat{\mu}_{AA}&=\partial_{1,0}f_{AA}(1,1)=\tilde{q}\frac{f_D''(\delta)}{f_D'(\delta)},\\
\hat{\mu}_{AB}&=\partial_{0,1}f_{AA}(1,1)=(1-\tilde{q})(1-p)\frac{f_D''(\delta)}{f_D'(\delta)},\\
\hat{\mu}_{BA}^A&=\partial_{1,0}f_{AB}(1,1)=\frac{\tilde{q}(1-p)}{\mu_D-f_D'(\delta)}\left(f_D''(1)-f_D''(\delta)\right),\\
\hat{\mu}_{BB}^A&=\partial_{0,1}f_{AB}(1,1)=\frac{1}{\mu_D-f_D'(\delta)}\left(\delta_1 f_D''(1)-\delta_2 f_D''(\delta)\right),\\
\hat{\mu}_{BA}^B&=\partial_{1,0}f_{BB}(1,1)=\frac{\tilde{q}(1-p)}{\delta_1 \mu_D-\delta_2 f_D'(\delta)}\left(\delta_1 f_D''(1)-\delta_2 f_D''(\delta)\right),\\
\hat{\mu}_{BB}^B&=\partial_{0,1}f_{BB}(1,1)=\frac{1}{\delta_1 \mu_D-\delta_2 f_D'(\delta)}\left(\delta_1^2 f_D''(1)-\delta_2^2 f_D''(\delta)\right),
\end{align*}
with
\[
\delta= 1-p+p\tilde{q},\quad \delta_1=1-\tilde{q}+p\tilde{q} \quad\text{and}\quad \delta_2=(1-p)(1-\tilde{q})+p\tilde{q}.
\]
Under this approximation, the threshold parameter $R_{0, {\rm BM}}^{(2)}$ for the second epidemic is given by the dominant eigenvalue of $M_{\rm BM}$.

Suppose that the probabilities of transmission for pairs of neighbours in the second epidemic take the form $\pi_{00}=p$, $\pi_{01}=\alpha_S p$, $\pi_{10}=\alpha_I p$ and $\pi_{11}=\alpha_S \alpha_I p$, as in the model for leaky partial immunity in Section \ref{sec:leakydef}.
Then, as shown in Section \ref{leaky:example}, the value of $R_0^{(2)}$ is the same as that under the polarized model for partial immunity,
in which $\pi_{00}=p$, $\pi_{01}=\alpha p$, $\pi_{10}= p$ and $\pi_{11}=\alpha p$, provided $\alpha_S \alpha_I =\alpha$.  The above two-type percolation approximation used by \cite{Bans12} for leaky partial immunity can also be used for polarized partial immunity and, provided $\alpha_S \alpha_I =\alpha$, the similarity transformation used in Section \ref{leaky:example} shows that under this approximation, $R_0^{(2)}$ is again the same for leaky and polarized partial immunity.

Let $\alpha_c^{\rm BM}$ denote the corresponding critical value of $\alpha$ under this approximation.  It follows from~\eqref{equ:BMapproxM} and a little algebra that $\alpha_c^{\rm BM}$ is the smallest root in $[0, 1]$ of the quadratic equation
$$
p^3\hat{\mu}_{AB} \left(\hat{\mu}_{BB}^A \hat{\mu}_{BA}^B-\hat{\mu}_{BB}^B \hat{\mu}_{BA}^A\right)\alpha^2+p\left[\hat{\mu}_{BB}^B+p\left(\hat{\mu}_{AB}\hat{\mu}_{BA}^A-\hat{\mu}_{AA}\hat{\mu}_{BB}^B\right)\right]\alpha+p\hat{\mu}_{AA}-1=0.
$$


\end{document}